\documentclass[journal]{IEEEtran}

\usepackage{graphicx}
\usepackage{subfigure}
\usepackage{epstopdf}
\usepackage[cmex10]{amsmath}
\usepackage{amssymb}
\usepackage{amsthm}
\newtheorem{proposition}{Proposition}
\usepackage{wasysym}
\usepackage{relsize}
\usepackage{algorithm}
\usepackage{algorithmic}
\usepackage{array}
\usepackage{cite}
\usepackage{color}
\usepackage{url}
\usepackage{bm}
\usepackage{enumerate}

\usepackage{epsfig,latexsym}

\usepackage{booktabs}
\usepackage{makecell}

\begin{document}
	\title{Finite-Aperture Modeling of Pattern-Reconfigurable Antennas for Multiuser Beamforming}
	
	
	\author{Kangjian~Chen,~\IEEEmembership{Member,~IEEE},  Han Qing Yang,~\IEEEmembership{Member,~IEEE}, and Hang Wong,~\IEEEmembership{Fellow,~IEEE}
	\thanks{This work has been submitted to the IEEE for possible publication. Copyright may be transferred without notice, after which this version may no longer be accessible.}
 		\thanks {Kangjian~Chen is with the School of Electronic and Optical Engineering, Nanjing University of Science and Technology, Nanjing 210094, China (e-mail: chenkj@njust.edu.cn). }
		\thanks {Han Qing Yang and Hang Wong are with the State Key Laboratory of Terahertz and Millimeter Waves, Department of Electrical Engineering, City University of Hong Kong, Hong Kong (e-mail: hanqyang@cityu.edu.hk,  hang.wong@cityu.edu.hk). } 
	}
	{}
	
	\maketitle
	
	%
	%
	
	
\begin{abstract}
This paper establishes a finite-aperture model for pattern-reconfigurable antennas (PRAs) and exploits its aperture-constrained Fourier representation to develop an efficient multiuser beamforming framework. First, we develop an ideal model by representing the complex radiation pattern using a Fourier series. By applying the Jacobi--Anger expansion, we relate the antenna aperture to the effective Fourier order and derive aperture-dependent power bounds for individual Fourier coefficients. Second, we employ a practical multiport network model and design three representative reconfigurable pixel antennas to verify the predicted effective Fourier orders and per-order power constraints. We then formulate weighted sum-rate maximization for multiuser downlink transmission under both models. Using bounded Fourier coefficients as auxiliary variables, we develop a coefficient-bound penalty-based weighted minimum mean-square error algorithm that separates continuous beamforming from discrete switch-state optimization. We further propose a one-bit neighborhood search that exploits the Sherman--Morrison identity to accelerate the discrete optimization. Simulation results validate the physical relevance of the ideal model and demonstrate multiuser sum-rate gains and reduced optimization runtime for densely switched antennas. The proposed model provides a physically grounded basis for PRA-enabled communication analysis and design, while the framework accommodates different beamforming objectives and PRA implementations.
\end{abstract}
\begin{IEEEkeywords}
	Pattern-reconfigurable antennas, reconfigurable pixel antennas, antenna modeling,  multiuser communications, beamforming design.
\end{IEEEkeywords}

\section{Introduction}
Multi-antenna technology has long been a cornerstone of modern wireless communications~\cite{Zhu2026MovableReconfigurable}. From first-generation to fifth-generation wireless systems, antenna architectures have evolved from single-antenna systems to multiple-input multiple-output (MIMO) and massive MIMO. More recently, this evolution has continued toward extremely large-scale MIMO~\cite{Chen2024XLMIMO}. By exploiting array, diversity, and spatial-multiplexing gains, these technologies have substantially improved transmission reliability and spectral efficiency. Nevertheless, these advances have been achieved primarily by increasing the number of antenna elements, while antenna configurations, including array geometries and antenna radiation characteristics, have remained largely fixed. In future wireless networks,  increasingly complex propagation environments and more demanding communication requirements will make it difficult for such fixed antenna configurations to fully meet system needs. Therefore, in addition to further increasing the array scale, it is also important to enhance antenna flexibility~\cite{Ying2024RmMIMOOverview,Wang2026FlexibleMIMO,Ning2025MAArchitectures}.

Reconfigurable antennas (RAs) have emerged as a promising approach to improving antenna flexibility by introducing controllable changes in antenna configurations, such as spatial position, orientation, and intrinsic radiation characteristics~\cite{Ma2026RASurvey,TWCLMZ,TWC25YSJ}. By dynamically controlling these configurations, RAs extend system adaptation from the signal domain to the antenna domain, allowing the antenna configuration to adapt to varying propagation conditions and communication requirements. Their considerable potential has attracted increasing attention from the wireless communications community, leading to extensive research on various forms of RAs~\cite{Zheng2026RotatableAntenna,Shao2025SixDMA,chen2025dbraa,TWCLMZ}. Among these, position reconfiguration and pattern reconfiguration represent two important reconfiguration mechanisms.

Position reconfiguration changes the positions at which the electromagnetic field is excited. By adjusting the antenna positions, position-dependent phase variations are introduced across the propagation paths, thereby changing their coherent superposition and reshaping the effective channel. When position reconfiguration is jointly optimized with baseband and radio frequency (RF) signal processing, the resulting effective channel variations provide additional degrees of freedom to improve transmission performance, suppress interference, and enhance estimation accuracy. Representative antenna technologies based on position reconfiguration include movable antennas (MA)~\cite{Zhu2024MAMultiuser}, fluid antenna systems (FAS)~\cite{Wong2021FAS}, and pinching-antenna systems (PASS)~\cite{Ding2025PinchingAntenna,chen2026hybrid}. MA physically repositions antennas within a prescribed region to obtain favorable channel conditions~\cite{Zhu2026MATutorial,chen2025remaa}. FAS realizes position flexibility by selecting antennas among densely distributed candidate ports within a compact region~\cite{New2025FASTutorial,Zhang2026FASISACURA,Zhang2026FASJADCE}. PASS creates radiating points at configurable locations along a dielectric waveguide~\cite{Liu2026PASSOverview}. Existing studies have investigated these technologies from multiple perspectives, including performance analysis, channel estimation, and joint position and beamforming optimization.

Pattern reconfiguration modifies the intrinsic complex angular response of an antenna by redistributing the surface currents over its radiating structure. Such redistribution is realized using tunable RF components, such as positive-intrinsic-negative (PIN) diodes and varactors~\cite{Christodoulou2012Reconfigurable,Liu2026RAArrays,TWCLMZ}. Research in this direction includes conventional pattern-reconfigurable antennas (PRAs) and implementations of the same pattern-reconfiguration principle developed within the FAS framework~\cite{Zhang2025PRAFAS}. Although these antenna technologies may employ different physical structures, their channel-adaptation capability is primarily derived from their controllable angular responses. An appropriate model of pattern reconfiguration is essential for communication-system analysis and design. Existing communication-oriented PRA models can be broadly divided into hardware-independent ideal models and hardware-aware practical models. Ideal models represent the radiation pattern using continuous functions or orthogonal basis expansions, such as spherical harmonics~\cite{Ying2025Reconfigurable,Zheng2026TriHybrid}. Practical models instead represent the radiation pattern using precharacterized pattern codebooks or physical models, such as multiport networks~\cite{Shen2026AntennaCoding}. Based on these models, existing studies have mainly investigated pattern selection or joint pattern and beamforming optimization.

In this paper, we focus on PRAs because their electronically controlled electromagnetic reconfigurability makes them well suited to rapid channel adaptation and large-scale integration. Despite the growing research on PRAs, three important issues remain unresolved. First, existing ideal models are often constructed from mathematical considerations and represent continuous pattern functions using orthogonal basis expansions. However, the physical constraints imposed by a finite antenna size  are not explicitly incorporated into these models. This disconnect obscures the physical interpretation of the optimized patterns and, more importantly, may admit patterns that cannot be physically realized, thereby greatly overestimating the performance gains offered by PRAs. Second, ideal and practical PRA models have largely been developed along separate research lines. Consequently, the physical validity of ideal models has not been systematically verified using realizable hardware, while the size- and hardware-induced limitations observed in practical models remain insufficiently explained from a theoretical perspective. Third, practical PRAs often employ numerous discrete RF switches, giving rise to a large set of antenna states. These discrete states are further coupled with the continuous digital beamformer, resulting in substantial design complexity. This difficulty motivates the development of low-complexity optimization methods.

To address these issues, this paper develops a finite-aperture ideal model for PRAs, establishes and validates the relationship between the ideal and practical PRA models, and develops a low-complexity hardware-aware algorithm for PRA-enabled multiuser communication systems. The main contributions are summarized as follows.
\begin{itemize}
	\item We develop a finite-aperture ideal model for PRAs. First, we represent the continuous complex radiation pattern using a Fourier series. By applying the Jacobi-Anger expansion, we then establish an explicit relationship between the antenna aperture and the effective Fourier order and derive aperture-dependent power bounds for the individual Fourier coefficients. The resulting finite-dimensional model provides a physically grounded, implementation-independent capability envelope for PRAs confined to a prescribed aperture.
	
	\item We establish and validate the relationship between the ideal and practical PRA models. Specifically, we design three reconfigurable pixel antennas (RPAs) with different aperture sizes and pixel layouts, and construct multiport network models that map their discrete switch states to radiation patterns. After validating the multiport network models against full-wave electromagnetic simulations, we represent the practically realizable patterns in the same Fourier domain as the ideal model. The results show that the practical patterns exhibit the aperture-dependent effective Fourier orders and satisfy the per-order power constraints in the ideal model, while the specific pixel and switch configurations determine the realizable subset within the ideal capability envelope. These results validate the physical relevance of the ideal model and provide a theoretical interpretation of the practical PRA characteristics.
	
	\item We formulate the weighted sum-rate (WSR) maximization problem for PRA-enabled multiuser communications and develop a coefficient-bound penalty-based weighted minimum mean-square error (WMMSE) algorithm for the resulting mixed-binary problem. By introducing  auxiliary continuous Fourier-coefficient variables, the algorithm decomposes the mixed-binary design into continuous Fourier coefficient optimization and discrete switch-state optimization. For the discrete optimization, we propose a one-bit neighborhood search (OBNS) that evaluates all single-bit neighbors of the current state and exploits the Sherman-Morrison identity to update the inverse of the loaded multiport impedance matrix.
\end{itemize}
 
The rest of this paper is organized as follows. Section~\ref{SysModel} develops the ideal and practical PRA models and validates their relationship. Section~\ref{sec:mb-system} presents the system model and problem formulations. Section~\ref{sec:mb-algorithm} develops the proposed optimization algorithm and analyzes its convergence as  well as computational complexity. Section~\ref{SR} presents the simulation results, and Section~\ref{CON} concludes the paper.

\textit{Notation:} Scalars, vectors, matrices, and sets are denoted by italic letters, boldface lowercase letters, boldface uppercase letters, and calligraphic letters, respectively. The superscripts $(\cdot)^*$, $(\cdot)^{\rm T}$ and $(\cdot)^{\rm H}$ denote the conjugate, transpose and conjugate transpose, respectively. The operators $|\cdot|$, $\|\cdot\|_2$, and $\|\cdot\|_{\rm F}$ denote the absolute value, Euclidean norm, and Frobenius norm, respectively. Moreover, $\mathbb C$, $\mathbb R$, and $\mathbb Z$ denote the sets of complex, real, and integer numbers, respectively. $\mathbb E\{\cdot\}$ denotes statistical expectation and $\mathrm{diag}\{\cdot\}$ constructs a diagonal matrix.

\section{Modeling of PRAs}\label{SysModel}
In this section, we investigate the modeling of PRAs. We first introduce the operating principle of PRAs and discuss their relationship with position reconfiguration, as commonly considered in MA and FAS. Then, we develop a  finite-aperture ideal model to characterize the reconfigurability of PRAs based on a Fourier series and the Jacobi-Anger expansion. Moreover, we present a multiport network model of RPAs as a representative practical realization of PRAs. Finally, we validate the developed ideal model using radiation patterns generated by practical pixel-antenna designs and establish a link between the ideal and practical models.

\begin{figure}[!t]
	\centering
	\includegraphics[width=90mm]{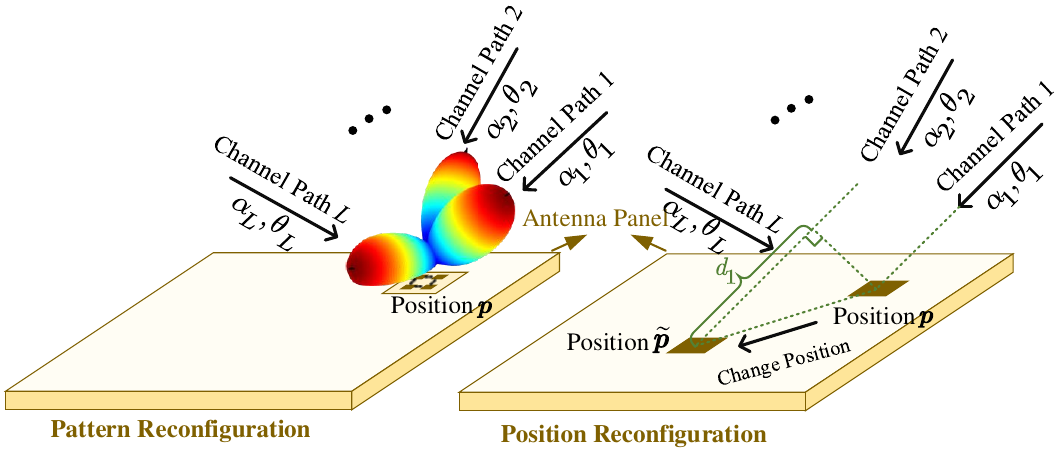}
	\caption{Multipath reweighting through pattern and position reconfiguration.
		\label{fig_image1}}
\end{figure}

\subsection{Introduction to PRAs}
A PRA is an antenna whose radiation pattern can be dynamically altered by controlling tunable components embedded in its radiating structure. In a two-dimensional propagation environment, its complex far-field response can be represented as a continuous function of the spatial angle $\theta \in [-\pi,\pi)$, denoted by $f(\theta)\in\mathbb{C}$.

To demonstrate how the radiation pattern affects multipath reception, we consider a radiation source whose signal reaches a PRA through $L$ propagation paths, as illustrated on the left of Fig.~\ref{fig_image1}. The channel gain and angle of the $l$th path are denoted by $\alpha_l$ and $\theta_l$, respectively. The resulting effective channel can be expressed as
\begin{align}\label{CRPRA}
	h = \sum_{l=1}^{L}f(\theta_l)\alpha_l.
\end{align}
According to \eqref{CRPRA}, the effective channel is a coherent superposition of the multipath components, where $f(\theta_l)$ determines the complex weight applied to the $l$th path. Reconfiguring the radiation pattern therefore changes the relative amplitudes and phases of these components and consequently reshapes the effective channel.

We next compare pattern reconfiguration with the position-reconfiguration mechanism commonly considered in MA and FAS, as illustrated on the right of Fig.~\ref{fig_image1}. Suppose that an antenna is displaced from $\boldsymbol{p}$ to $\widetilde{\boldsymbol{p}}$. Neglecting the variation in large-scale path loss caused by the displacement, the effective channel becomes
\begin{align}\label{CRMF}
	\overline{h}
	= \sum_{l=1}^{L} e^{j2\pi (d_l(\widetilde{\boldsymbol{p}})-d_l({\boldsymbol{p}}))/\lambda}\alpha_l
	= \sum_{l=1}^{L}e^{j\psi_l(\widetilde{\boldsymbol{p}})}\alpha_l,
\end{align}
where $d_l({\boldsymbol{p}})$ denotes the propagation distance associated with the $l$th path at position ${\boldsymbol{p}}$, $\psi_l(\widetilde{\boldsymbol{p}})\triangleq 2\pi (d_l(\widetilde{\boldsymbol{p}})-d_l({\boldsymbol{p}}))/\lambda$, and $\lambda$ denotes the carrier wavelength. \eqref{CRMF} shows that position reconfiguration also changes the coherent superposition of the propagation paths by modifying their complex weights.

From \eqref{CRPRA} and \eqref{CRMF}, both reconfiguration mechanisms modify the coherent superposition of the propagation paths by changing their complex weights. However, the underlying mechanisms and the resulting constraints on these weights are different. Position reconfiguration modifies the path phases through physical displacement, resulting in unit-modulus weights $e^{j\psi_l(\widetilde{\boldsymbol{p}})}$ that are jointly determined by a common position vector. In contrast, pattern reconfiguration directly modifies the angular response $f(\theta_l)$ and can therefore adjust both the amplitudes and phases of the path weights, subject to the set of physically realizable radiation patterns. Consequently, a PRA can reproduce the same effective channel as position reconfiguration if the corresponding position-induced path weights are realizable by its radiation pattern. Moreover, the two mechanisms induce different feasible sets of path weights. Specifically, position reconfiguration is constrained by the unit-modulus weights jointly determined by the antenna position, whereas pattern reconfiguration is constrained by the physically realizable radiation patterns.

Note that the equivalence between  position reconfiguration and pattern reconfiguration has been investigated in previous works~\cite{zhang2026pixel,chen2025remaa,Zhang2025PRAFAS}. These studies primarily establish the equivalence by matching spatial correlation characteristics and focus on emulating physical antenna movement. In contrast, this work interprets the relationship from a path-wise channel-transformation perspective. As discussed above, pattern reconfiguration is not restricted to position-induced unit-modulus path weighting, but has its own feasible set of complex path weights determined by the physically realizable radiation patterns. Therefore, PRAs should not be viewed merely as a means of emulating antenna movement. Their distinct reconfigurability may provide additional degrees of freedom for manipulating multipath superposition. Motivated by this observation, we next characterize the intrinsic reconfigurability of PRAs through ideal and practical radiation-pattern models.

\subsection{Ideal Model for PRAs}~\label{IdealModel}
We next develop an implementation-independent ideal model to characterize the intrinsic pattern reconfigurability of a PRA. The radiation pattern $f(\theta)$ is a continuous complex-valued function of the spatial angle, and directly treating $f(\theta)$ as a design variable leads to an infinite-dimensional formulation. To obtain a tractable parametric representation, we expand it using a Fourier series as
\begin{align}\label{FourierTransform}
	f(\theta)=\sum_{n=-\infty}^{\infty}\gamma_n e^{jn\theta},
\end{align}
where $\gamma_n\in\mathbb C$ denotes the $n$th Fourier coefficient.

\subsubsection{Order Constraint}\label{OC}
Although \eqref{FourierTransform} converts the functional design into the design of Fourier coefficients, it still contains infinitely many variables. To obtain a finite-dimensional representation, we next exploit the physical constraint imposed by the finite antenna size.

\begin{proposition}
	Let $R_{\rm A}$ denote the radius of the smallest circle enclosing the antenna aperture.
	The Fourier coefficients of $f(\theta)$ admit a rapidly decaying upper envelope for
	$|n|>e\pi R_{\rm A}/\lambda$. Accordingly, an effective Fourier truncation order is
	given by
	\begin{align}
		N=\left\lceil \frac{e\pi R_{\rm A}}{\lambda}\right\rceil .
	\end{align}
\end{proposition}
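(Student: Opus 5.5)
We will represent the far field as the superposition of the fields radiated by the surface-current distribution supported in the aperture. Let $\mathcal A$ be the aperture, enclosed by a circle of radius $R_{\rm A}$. We place the origin at the center of this circle and write a source point in polar form as $\boldsymbol r'=(\rho',\phi')$ with $\rho'\le R_{\rm A}$.

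In the two-dimensional setting, the complex far-field pattern is, up to a polarization/element factor that we absorb into the current density $J$,
\begin{align*}
f(\theta)=\int_{\mathcal A} J(\boldsymbol r')\,e^{jk\rho'\cos(\theta-\phi')}\,d\boldsymbol r',\qquad k=2\pi/\lambda .
\end{align*}
We then apply the Jacobi--Anger expansion
\begin{align*}
e^{jk\rho'\cos(\theta-\phi')}=\sum_{n=-\infty}^{\infty} j^n J_n(k\rho')e^{jn(\theta-\phi')} .
\end{align*}
Interchanging sum and integral, which is justified by absolute convergence since $\sum_n|J_n(x)|$ converges uniformly on bounded $x$, and comparing with \eqref{FourierTransform} gives
\begin{align*}
\gamma_n=j^n\int_{\mathcal A}J(\boldsymbol r')J_n(k\rho')e^{-jn\phi'}\,d\boldsymbol r' .
\end{align*}
Hence
\begin{align*}
|\gamma_n|\le \|J\|_{1}\,\max_{0\le x\le kR_{\rm A}}|J_n(x)| .
\end{align*}
The whole question is thus reduced to the decay of Bessel functions in the order $n$ for a bounded argument $x\le kR_{\rm A}$.

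The key step is the standard bound
\begin{align*}
|J_n(x)|\le \frac{(x/2)^{|n|}}{|n|!},
\end{align*}
which follows from the power series, or from the Poisson integral, for real $x\ge0$. Combining it with Stirling's lower bound $|n|!\ge\sqrt{2\pi|n|}\,(|n|/e)^{|n|}$ gives
\begin{align*}
|J_n(x)|\le \frac{1}{\sqrt{2\pi |n|}}\left(\frac{e x}{2|n|}\right)^{|n|}
\le\frac{1}{\sqrt{2\pi |n|}}\left(\frac{e\pi R_{\rm A}}{\lambda |n|}\right)^{|n|}.
\end{align*}
For $|n|>e\pi R_{\rm A}/\lambda$ the base is below one. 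The envelope therefore decays super-geometrically: each further increment of $|n|$ multiplies it by a factor smaller than $e\pi R_{\rm A}/(\lambda|n|)<1$, and this factor itself tends to zero. This is the claimed rapidly decaying upper envelope. The threshold at which the base crosses one is exactly $e\pi R_{\rm A}/\lambda$, so we take $N=\lceil e\pi R_{\rm A}/\lambda\rceil$ as the effective truncation order, with all coefficients for $|n|>N$ bounded by the envelope above. (Note that $(e\pi R_{\rm A}/\lambda)/(kR_{\rm A}) = e/2$.)

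The main obstacle is modeling rather than analysis. We must justify that any realizable pattern of an antenna confined to the circle has the integral form above with an integrable source density. For pixel or metallic structures with equivalent surface currents, this follows from the equivalence principle. Beyond that, the only subtlety is that the proposition is an envelope statement and not an exact bandlimit, since $J_n(x)\neq0$ for every $n$. We will therefore phrase the conclusion as a bound on $|\gamma_n|$ normalized by $\|J\|_1$ rather than as a strict truncation. We will also remark that the sharper and well-known threshold $|n|\approx kR_{\rm A}$, from the transition region of the Bessel functions, is smaller than $N$ by the factor $e/2$, so the chosen $N$ is conservative.
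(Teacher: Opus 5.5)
Your proposal is correct and follows the paper's proof in Appendix~A essentially step for step: the same aperture-integral representation, Jacobi--Anger expansion to identify $\gamma_n$, bound $|J_n(x)|\le (x/2)^{|n|}/|n|!$ combined with a Stirling-type lower bound on $|n|!$, and threshold $e\pi R_{\rm A}/\lambda$. Your additions are valid refinements the paper omits: the extra $1/\sqrt{2\pi|n|}$ factor, the Fubini justification for exchanging sum and integral, and the remark that $N$ exceeds the Bessel transition order $2\pi R_{\rm A}/\lambda$ by the factor $e/2$. One small correction: the clean bound on $|J_n|$ should be credited to the Poisson integral, since termwise bounding of the power series only gives it with an extra growing factor.
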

\textit{Proof: See Appendix A.}

Proposition 1 indicates that the high-order Fourier coefficients are increasingly
suppressed beyond an aperture-dependent effective order. Accordingly, retaining
the $2N+1$ Fourier orders with $|n|\leq N$, where $N$ is determined by the
electrical size of the antenna aperture, provides the finite-dimensional approximation
\begin{align} \label{FourierTransform2}
	f(\theta) \approx \sum_{n=-N}^{N}\gamma_n e^{jn\theta}.
\end{align}


\subsubsection{Total Power Constraint}\label{TPC}
To eliminate arbitrary scaling of the radiation pattern, we normalize its angular power to unity. Using \eqref{FourierTransform2}, we have
\begin{align}\label{PatternPower}
	\frac{1}{2\pi} \int_{-\pi}^{\pi} |f(\theta)|^2 {\rm d}\theta&\approx\frac{1}{2\pi} \int_{-\pi}^{\pi} \bigg| \sum_{n=-N}^{N}\gamma_n e^{jn\theta}\bigg|^2 {\rm d}\theta\nonumber\\
	&\overset{\rm (a)}{=}\sum_{n=-N}^{N} |\gamma_n|^2,
\end{align}
where (a) follows from the orthogonality of the Fourier basis functions. Therefore, the normalized Fourier coefficients satisfy
\begin{align}\label{CoefficientPowerConstraint}
	\sum_{n=-N}^{N} |\gamma_n|^2 =1. 
\end{align}

\subsubsection{Per-Order Power Constraint}\label{PCE}
The total power constraint alone does not capture the order-dependent limitations imposed by the finite antenna size. In particular, sufficiently high-order angular modes are strongly suppressed for an antenna with a finite electrical size~\cite{Bucci1987Spatial}. Therefore, an additional per-order power constraint is imposed on the Fourier coefficients. As derived in Appendix~B, the power of the $n$th-order coefficient satisfies
\begin{align}\label{TB}
	|\gamma_n|^2\leq\min\{1,C B_n\},
\end{align}
where $B_n$ is an aperture-dependent factor and $C$ characterizes the admissible aperture-current intensity. For convenience, define
\begin{align}
	\rho_n\triangleq\sqrt{\min\{1,C B_n\}}
\end{align}
as the amplitude bound of the $n$th Fourier coefficient.

Combining the order, total-power, and per-order power constraints, the proposed ideal PRA model is given by
\begin{align}
	&f(\theta)\approx
	\sum_{n=-N}^{N}
	\gamma_n e^{jn\theta},\nonumber\\
	&\sum_{n=-N}^{N}
	|\gamma_n|^2=1,\quad|\gamma_n|\leq\rho_n,\quad
	n=-N,\ldots,N.
\end{align}

Note that ideal models for PRAs have also been considered in existing works~\cite{Ying2025Reconfigurable,Zheng2026TriHybrid}. The developed model extends these formulations in three aspects. First, it adopts a complex-valued Fourier representation that explicitly captures both the amplitude and phase of the angular response. Such joint amplitude-phase characterization is consistent with practical PRA designs~\cite{WangPhaseAmp2024}, and is particularly suitable for manipulating the coherent superposition of multipath components. Second, the truncation order is explicitly related to the finite electrical size of the antenna aperture. Third, an order-dependent power bound is imposed on each angular mode to capture the aperture-induced suppression of high-order modes. Taken together, the truncation-order relation and the per-order power bounds specify two fundamental physical conditions that an ideal PRA model should satisfy.

In summary, the proposed ideal model provides an antenna aperture-dependent theoretical capability envelope on the radiation patterns achievable by a PRA confined to a prescribed finite aperture. 


\subsection{Practical Model  for PRAs}~\label{PracticalModel}
The radiation pattern is governed by the surface-current distribution over the antenna radiating structure. Therefore, reconfiguring the radiation pattern requires the surface current to be redistributed. An RPA provides a practical architecture for this purpose. As shown on the left side of Fig.~\ref{fig_image2}, an RPA consists of metallic pixels interconnected by RF switches, such as PIN diodes. Switching each PIN diode between the ``on'' and ``off'' states changes the current paths over the antenna surface and reconfigures the resulting radiation pattern.

\begin{figure}[!t]
	\centering
	\includegraphics[width=90mm]{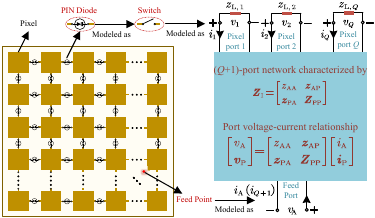}
	\caption{Equivalent multiport network model of an RPA.
		\label{fig_image2}}
\end{figure}

A multiport network model is widely used to characterize the effect of the PIN diode states on the radiation pattern~\cite{Shen2026AntennaCoding}. In this model, each feed point and each PIN-diode connection is represented as a port, while each PIN diode is represented by a switchable load. Let $Q$ denote the number of PIN diodes. The resulting network contains $Q+1$ ports, comprising one feed port and $Q$ pixel ports. The current and voltage of the $q$th pixel port, $q=1,2,\cdots,Q$, are denoted by $i_q$ and $v_q$, respectively. For notational convenience, define $\boldsymbol{i}_{\rm P} \triangleq [i_1,i_2,\cdots,i_Q]^{\rm T}$ and $\boldsymbol{v}_{\rm P} \triangleq [v_1,v_2,\cdots,v_Q]^{\rm T}$. The feed-port current and voltage are denoted by $i_{\rm A}$ and $v_{\rm A}$, respectively. The port voltages $[v_{\rm A},\boldsymbol{v}_{\rm P}^{\rm T}]^{\rm T}$ and currents $[i_{\rm A},\boldsymbol{i}_{\rm P}^{\rm T}]^{\rm T}$ are related through the network impedance matrix
\begin{align}
	\boldsymbol{Z}_{\rm I} = \left[ \begin{matrix}
		z_{\mathrm{A}\mathrm{A}}&		\boldsymbol{z}_\mathrm{AP}\\
		\boldsymbol{z}_\mathrm{PA}&		\boldsymbol{Z}_\mathrm{PP}\\
	\end{matrix} \right],
\end{align}
where $z_{\mathrm{A}\mathrm{A}}$, $\boldsymbol{z}_\mathrm{AP}$, $\boldsymbol{z}_\mathrm{PA}$, and $\boldsymbol{Z}_\mathrm{PP}$ denote the feed-port self-impedance, the feed-pixel mutual-impedance row and column vectors, and the pixel-port impedance matrix, respectively. The port voltage-current relationship is
\begin{align}\label{portrelationship}
	\left[\begin{matrix}
		v_\mathrm{A}\\
		\boldsymbol{v}_\mathrm{P}\\
	\end{matrix} \right] = \left[\begin{matrix}
		z_{\mathrm{A}\mathrm{A}}&		\boldsymbol{z}_\mathrm{AP}\\
		\boldsymbol{z}_\mathrm{PA}&		\boldsymbol{Z}_\mathrm{PP}\\
	\end{matrix} \right]\left[\begin{matrix}
		i_\mathrm{A}\\
		\boldsymbol{i}_\mathrm{P}\\
	\end{matrix} \right].
\end{align}

We next model the PIN diodes. Each PIN diode operates in either the ``on'' or ``off'' state, and the collective switch state is represented by the binary vector $\boldsymbol{t}\in\mathbb{Z}^{Q}$. Specifically, $[\boldsymbol{t}]_q=0$ and $[\boldsymbol{t}]_q=1$ represent the ``on'' and ``off'' states of the $q$th PIN diode, respectively. Under the ideal-switch approximation, the impedance of a PIN diode is nearly zero in the ``on'' state and approaches infinity in the ``off'' state. Accordingly, the PIN diodes are represented by the load-impedance matrix
\begin{align}
	\boldsymbol{Z}_{\rm L}(\boldsymbol{t}) = \mathrm{diag}\{z_{{\rm L},1},z_{{\rm L},2},\cdots,z_{{\rm L},Q}\},
\end{align}
where the $q$th diagonal entry of $\boldsymbol{Z}_{\rm L}(\boldsymbol{t})$ is
\begin{align}
	z_{{\rm L},q} = \begin{cases}
		0	,~~~~~[\boldsymbol{t}]_q = 0,\\
		\infty	,~~~~[\boldsymbol{t}]_q = 1.\\
	\end{cases}
\end{align}
In numerical evaluations, the infinite off-state impedance is approximated by a sufficiently large value, e.g., $10^{6}$. The load impedance, pixel-port current, and pixel-port voltage satisfy
\begin{align}\label{vP1}
	\boldsymbol{v}_\mathrm{P} = -\boldsymbol{Z}_{\rm L}(\boldsymbol{t})\boldsymbol{i}_\mathrm{P}.
\end{align}
From \eqref{portrelationship}, the pixel-port voltage also satisfies
\begin{align}\label{vP2}
	\boldsymbol{v}_\mathrm{P} = \boldsymbol{z}_\mathrm{PA}i_\mathrm{A} + \boldsymbol{Z}_\mathrm{PP}\boldsymbol{i}_\mathrm{P}.
\end{align}
Combining \eqref{vP1} and \eqref{vP2} yields
\begin{align}\label{Ip}
	\boldsymbol{i}_\mathrm{P} = -(\boldsymbol{Z}_{\rm L}(\boldsymbol{t})+\boldsymbol{Z}_\mathrm{PP})^{-1}\boldsymbol{z}_\mathrm{PA}i_\mathrm{A}.
\end{align}

By electromagnetic linearity, the radiation pattern of the RPA is the superposition of the open-circuit embedded patterns of the feed and pixel ports, weighted by their respective port currents. Hence, the resulting radiation pattern is
\begin{align}\label{matcal}
	g(\theta) = \sum_{q=1}^{Q+1} i_{q} b_q(\theta),
\end{align}
where $b_q(\theta)$, $q=1,2,\cdots,Q$, denotes the open-circuit embedded radiation pattern associated with the $q$th pixel port. We stipulate that $i_{Q+1} = i_{\rm A}$ for ease of notation and $b_{Q+1}(\theta)$ is the open-circuit embedded radiation pattern associated with the feed port. Note that the impedance parameters and embedded radiation patterns required by this practical model can be obtained from full-wave electromagnetic simulations.

\begin{figure}[!t]
	\centering
	\includegraphics[width=85mm]{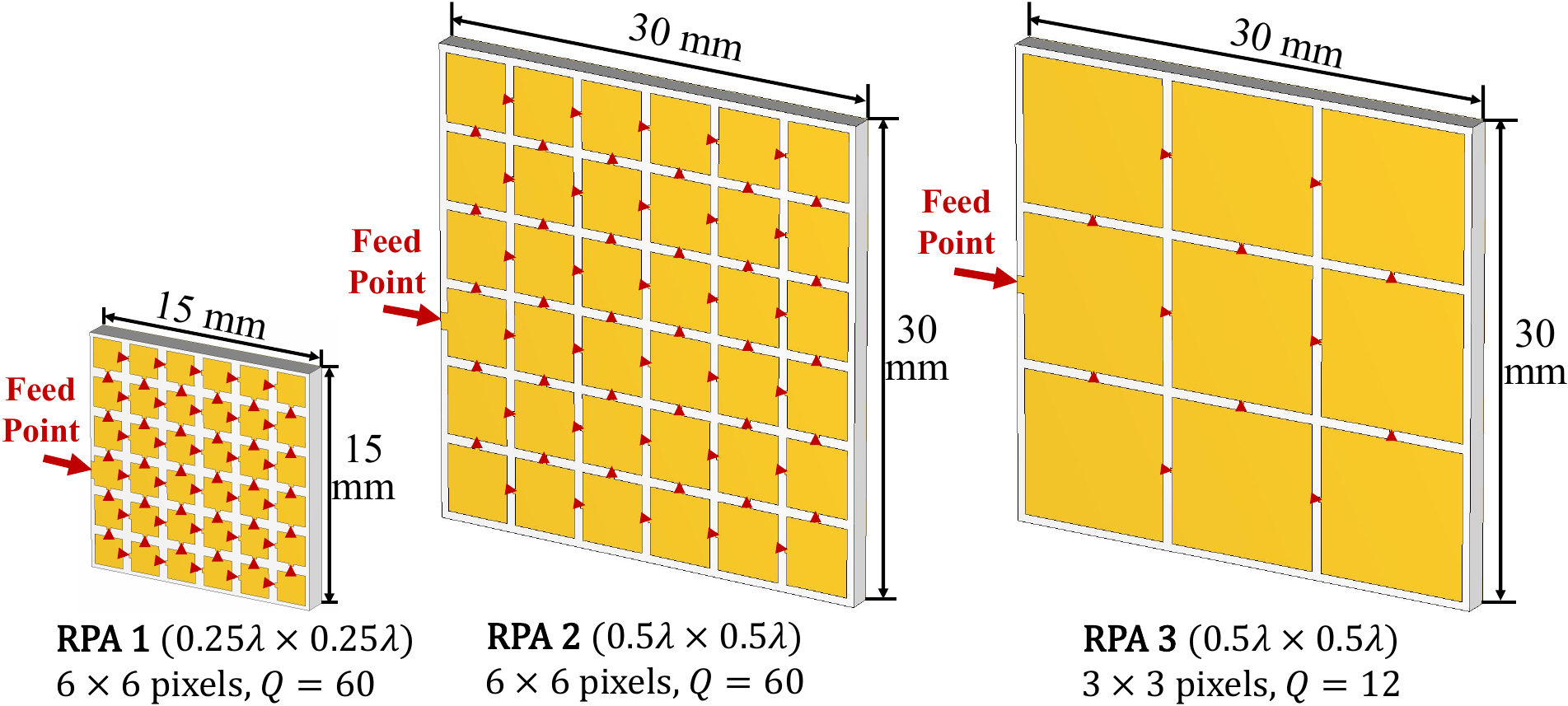}
	\caption{Three RPA designs at 5 GHz.
		\label{fig_image3}}
				\vspace{-0.2cm}
\end{figure}

\begin{figure}[!t]
	\centering
	\includegraphics[width=85mm]{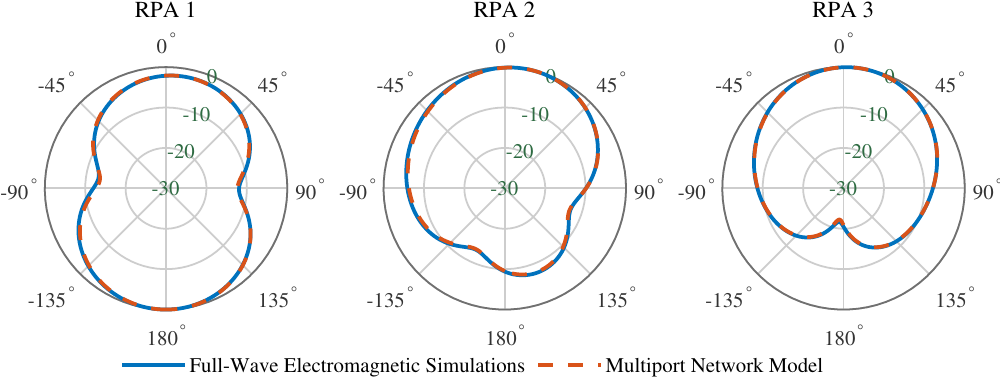}
	\caption{Normalized radiation patterns obtained by full-wave electromagnetic simulations and the multiport network model. 
	\label{fig_image4}}
		\vspace{-0.5cm}
\end{figure}

\subsection{Consistency Between the Ideal and Practical Models}
\label{ModelVerification}

The ideal and practical PRA models characterize pattern reconfigurability from complementary perspectives. The ideal model in Section~\ref{IdealModel} characterizes the implementation-independent capability envelope imposed by a prescribed antenna size, whereas the practical model in Section~\ref{PracticalModel} describes the radiation patterns realizable by a specific RPA architecture. Their consistency can therefore be examined by comparing the Fourier-domain characteristics of practically realizable radiation patterns with the constraints determined by the ideal model. Specifically, practical RPA designs are used to verify the aperture-dependent Fourier order and per-order coefficient bounds of the ideal model, while the ideal model provides a theoretical reference for interpreting the Fourier-domain behavior and realizable coefficient regions of practical RPAs.

To examine this consistency, we consider three RPA designs, denoted by RPA~1, RPA~2, and RPA~3, as illustrated in Fig.~\ref{fig_image3}. RPA~1 employs a $6\times6$ pixel layout within a $0.25\lambda\times0.25\lambda$ aperture, RPA~2 employs a $6\times6$ pixel layout within a $0.5\lambda\times0.5\lambda$ aperture, and RPA~3 employs a $3\times3$ pixel layout within the same $0.5\lambda\times0.5\lambda$ aperture. These configurations allow the effects of the aperture size and hardware configuration to be distinguished. Specifically, RPA~1 and RPA~2 have the same pixel layout but different aperture sizes, whereas RPA~2 and RPA~3 have the same aperture size but different pixel layouts. Consistent with the two-dimensional angular model adopted in this work, the following validation is performed using the complex far-field responses over a fixed angular cut. 

We first verify the practical multiport network model by comparing the radiation patterns calculated using \eqref{matcal} with those obtained from full-wave electromagnetic simulations. As shown in Fig.~\ref{fig_image4}, the radiation patterns obtained using the two approaches are in close agreement. This agreement confirms that the multiport network model accurately captures the dependence of the practical radiation pattern on the switch configuration and can therefore be used to investigate the Fourier-domain characteristics of the three RPAs.

\begin{figure}[!t]
	\centering
	\includegraphics[width=75mm]{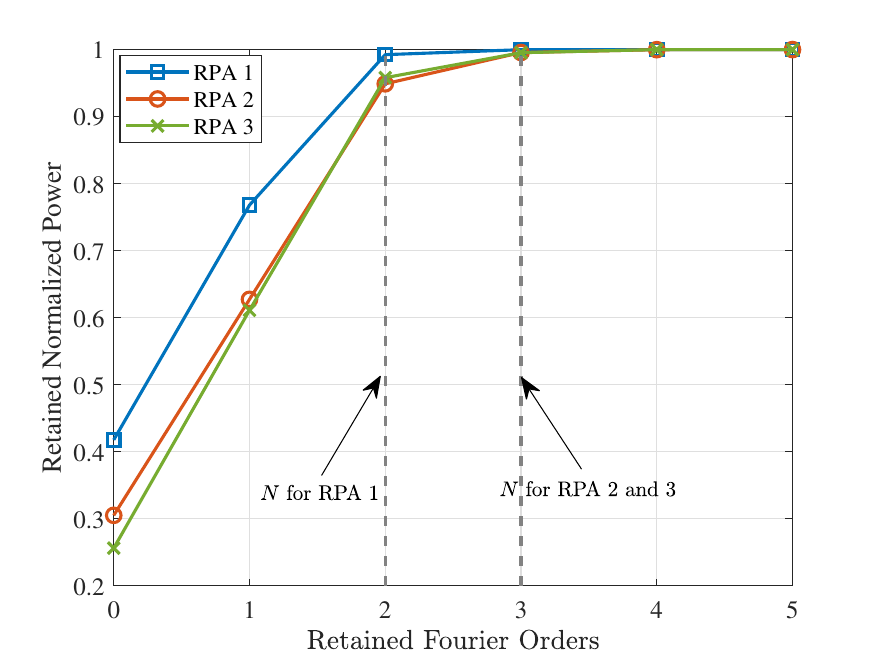}
	\caption{Average retained normalized power versus the retained Fourier order. 
		\label{fig_image5}}
				\vspace{-0.5cm}
\end{figure}

We next examine the order constraint developed in Section~\ref{OC}. Fig.~\ref{fig_image5} shows the average normalized pattern power retained when the practical radiation patterns are reconstructed using progressively increasing Fourier orders. According to Proposition~1, using the projected aperture radius in the horizontal plane, the predicted truncation order $N$ is $2$ for RPA~1 and $3$ for RPA~2 and RPA~3. When the retained Fourier order reaches the corresponding $N$, approximately $99.2\%$ of the pattern power is retained for RPA~1 and $99.5\%$ for RPA~2 and RPA~3. These results validate the aperture-dependent order constraint predicted by the ideal model. In particular, RPA~2 and RPA~3 have the same physical aperture and consequently exhibit the same predicted effective Fourier orders despite their different pixel layouts. If a higher retained-power ratio is required in practice, an additional truncation margin $U_{\rm r}\in\mathbb Z_{+}$ can be introduced to capture the residual high-order components.

\begin{figure}[!t]
	\centering
	\includegraphics[width=75mm]{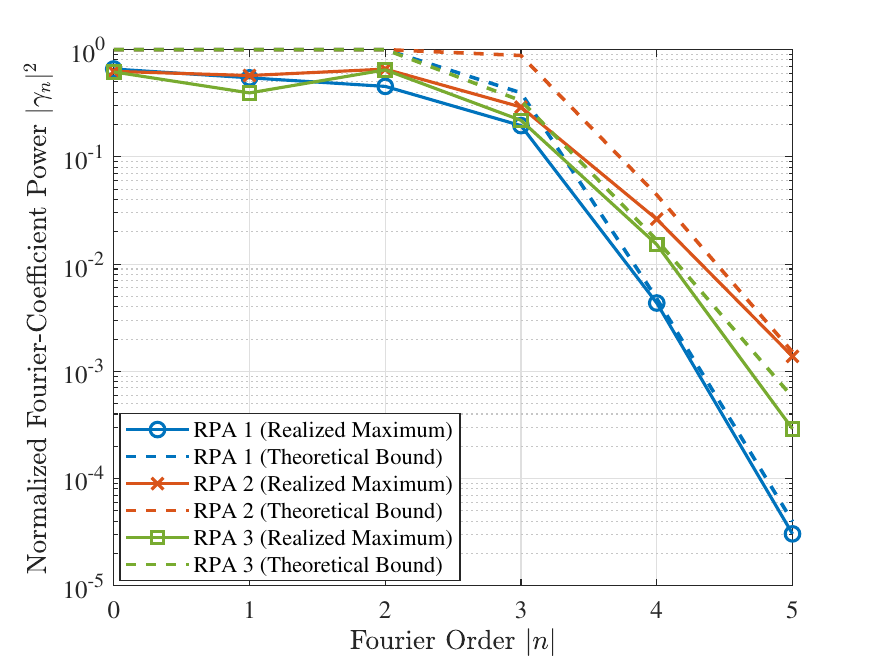}
	\caption{Normalized Fourier-coefficient power versus Fourier order.
		\label{fig_image6}}
		\vspace{-0.5cm}
\end{figure}

We further examine the per-order power constraint in Section~\ref{PCE}. Fig.~\ref{fig_image6} compares the maximum normalized Fourier-coefficient power realized by the three RPAs with the corresponding theoretical bounds in \eqref{TB}. For all three designs, the realized Fourier-coefficient powers remain below the theoretical bounds. Moreover, the coefficient powers become negligible at sufficiently high Fourier orders, consistent with the aperture-induced limitation characterized by the ideal model. These observations further validate the per-order Fourier-coefficient power constraints derived for the ideal model. Meanwhile, a noticeable gap remains between the realized maxima and the theoretical bounds. This gap is expected because the theoretical bounds characterize the implementation-independent capability envelope associated with a prescribed aperture, whereas a practical RPA is additionally constrained by its pixel geometry, switch topology, and realizable current distributions.

We finally examine the realizable Fourier-coefficient regions of the three RPAs in Fig.~\ref{fig_image7}. Each sample is obtained from a switch configuration by calculating the corresponding radiation pattern and extracting its Fourier coefficients. The samples of RPA~1 and RPA~2 form denser and more diverse realizable sets than those of RPA~3, showing that the hardware configuration affects the range of achievable Fourier coefficients. In particular, although RPA~2 and RPA~3 have the same aperture size and thus the same effective Fourier order, their different pixel layouts lead to substantially different realizable coefficient regions.

The above results demonstrate the consistency between the ideal and practical PRA models. The practical RPA designs exhibit the aperture-dependent effective Fourier order and per-order Fourier-coefficient power limitations predicted by the ideal model. In turn, the ideal model provides an implementation-independent theoretical reference for interpreting the Fourier-domain characteristics of practical RPAs: the antenna aperture mainly determines the effective Fourier order and the corresponding capability envelope, whereas the pixel layout and switch configuration determine the realizable coefficient region within this envelope. Therefore, the ideal and practical models provide complementary descriptions of PRA reconfigurability from theoretical and implementation perspectives, respectively.

\section{System Model and Problem Formulation}
\label{sec:mb-system}

Based on the PRA models developed in Section~\ref{SysModel}, we establish a Fourier-domain downlink model for PRA-enabled multiuser MIMO and formulate the WSR maximization problems under the ideal and practical PRA models.

\begin{figure}[!t]
	\centering
	\includegraphics[width=90mm]{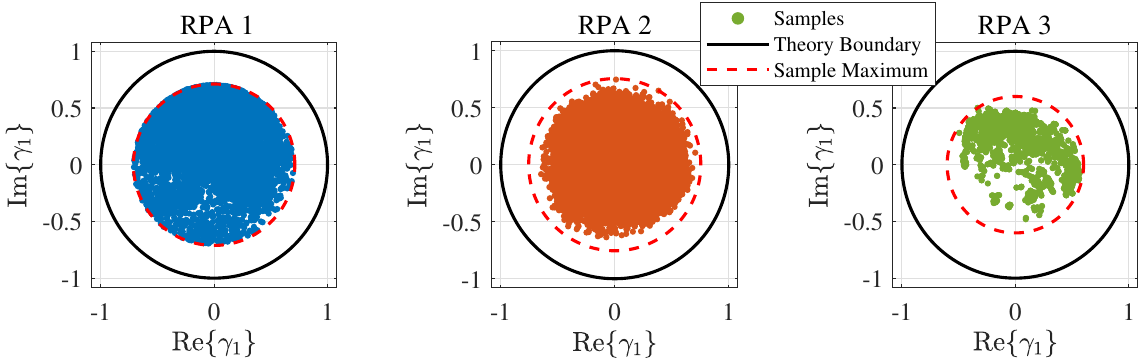}
	\caption{Realized  first-order  Fourier-coefficient samples of the three RPAs.
		\label{fig_image7}}
				\vspace{-0.5cm}
\end{figure}
 
\subsection{System Model}
\label{subsec:mb-downlink}

We consider a narrowband downlink system in which a base station (BS) equipped with $M$ PRAs simultaneously serves $K$ single-antenna users. Each PRA is driven by a dedicated RF chain, and the BS employs the fully digital beamformer $\boldsymbol W=[\boldsymbol w_1,\ldots,\boldsymbol w_K]\in\mathbb C^{M\times K}$. Let $\boldsymbol s=[s_1,\ldots,s_K]^{\mathrm T}\in\mathbb C^K$ denote the data
symbol vector, where $\mathbb E\{\boldsymbol s\boldsymbol s^{\mathrm H}\}=\boldsymbol I_K$. The transmit-power constraint is
\begin{align}
	\|\boldsymbol W\|_{\mathrm F}^2\leq P_0.
	\label{eq:mb-transmit-power}
\end{align}

Under the ideal PRA model in Section~\ref{IdealModel}, the radiation pattern of
the $m$th PRA is represented as
\begin{align}
	f_m(\theta)=\boldsymbol f(\theta)^{\mathrm T}\boldsymbol\gamma_m,
	\label{IModel}
\end{align}
where $\boldsymbol f(\theta)=[e^{-jN\theta},\ldots,1,\ldots,e^{jN\theta}]^{\mathrm T} \in\mathbb C^T$ contains the retained angular basis functions and $\boldsymbol\gamma_m=[\gamma_{m,-N},\ldots,\gamma_{m,N}]^{\mathrm T} \in\mathbb C^T$ contains the corresponding Fourier coefficients, with $T\triangleq2N+1$.

For the practical PRA model, let $\boldsymbol t_m\in\{0,1\}^{Q}$ denote the PIN diode state vector of the $m$th PRA. According to the multiport network model, its radiation pattern is
\begin{align}
	g_m(\theta)
	=\boldsymbol b(\theta)^{\mathrm T}\boldsymbol i_m(\boldsymbol t_m),
	\label{PModel}
\end{align}
where $\boldsymbol b(\theta)=[b_1(\theta),\ldots,b_{Q+1}(\theta)]^{\mathrm T} \in\mathbb C^{Q+1}$ collects the embedded port patterns and
$\boldsymbol i_m(\boldsymbol t_m)=[i_{m,1}(\boldsymbol t_m),\ldots,i_{m,Q+1}(\boldsymbol t_m)]^{\mathrm T} \in\mathbb C^{Q+1}$ collects the corresponding port currents. Here, $Q$ is the number of PIN diodes, and the $(Q+1)$ ports comprise $Q$ pixel ports and one feed port.

The consistency analysis in Section~\ref{ModelVerification} shows that practical patterns can be represented in the same Fourier domain as ideal patterns. Accordingly, the truncated representation of \eqref{PModel} is
\begin{align}\label{PModel2}
	g_m(\theta) \approx\boldsymbol f(\theta)^{\mathrm T} \boldsymbol\xi_m(\boldsymbol t_m),
\end{align}
where
$\boldsymbol\xi_m(\boldsymbol t_m) =[\xi_{m,-N}(\boldsymbol t_m),\ldots,\xi_{m,N}(\boldsymbol t_m)]^{\mathrm T} \in\mathbb C^T$ contains the retained Fourier coefficients of the practical pattern. Projecting \eqref{PModel} onto the Fourier basis gives
\begin{align}
	\xi_{m,n}(\boldsymbol t_m)
	&=\sum_{q=1}^{Q+1}c_{n,q}i_{m,q}(\boldsymbol t_m),~ n=-N,\ldots,N,
	\label{eq:practical-modal-coefficient}\\
	c_{n,q}
	&\triangleq\frac{1}{2\pi}\int_{-\pi}^{\pi}
	b_q(\theta)e^{-jn\theta}\,\mathrm d\theta.
	\label{eq:embedded-pattern-coefficient}
\end{align}
Define $\boldsymbol C\in\mathbb C^{T\times(Q+1)}$ by $[\boldsymbol C]_{n+N+1,q}=c_{n,q}$. Then, the practical multiport model and the common Fourier-domain model are linked by the mapping
\begin{align}
	\boldsymbol\xi_m(\boldsymbol t_m)
	=\boldsymbol C\boldsymbol i_m(\boldsymbol t_m).
	\label{eq:hardware-to-modal-map}
\end{align}

In this paper, we focus on the channel-shaping gain provided by pattern reconfiguration. To keep the same total-pattern-power scale as the ideal models, we normalize $\boldsymbol{\xi}_m(\boldsymbol{t}_m)$ as 
\begin{align}
	\boldsymbol{\zeta}_m(\boldsymbol{t}_m) = \boldsymbol{\xi}_m(\boldsymbol{t}_m)/\|\boldsymbol{\xi}_m(\boldsymbol{t}_m)\|_2.
\end{align}

From \eqref{IModel} and \eqref{PModel2}, the radiation patterns of PRAs are composed of the channel angle-dependent responses and the electromagnetic design parameters. By incorporating the channel-angle-dependent responses into the Saleh--Valenzuela channel model, we express the Fourier-domain channel between the $m$th antenna and the $k$th user as
\begin{align}\label{modalchannel}
	\boldsymbol{h}_{m,k} = \sum_{l=1}^{L_k}\alpha_{k,l}^*a_m^*(\theta_{k,l})\boldsymbol f^*(\theta_{k,l}),
\end{align}
where $L_k$ is the number of channel paths, and $\alpha_{k,l}$ and $\theta_{k,l}$ are the channel gain and  angle of the $l$th path, respectively. $a_m(\theta_{k,l})$ is the array response of the $m$th antenna. Denote the position of the $m$th antenna by $x_m$. Then, $a_m(\theta_{k,l}) = e^{j2\pi x_m\sin\theta_{k,l}/\lambda}$. Stacking $\boldsymbol{h}_{m,k}$ for $M$ PRAs yields
\begin{align}
	\overline{\boldsymbol{h}}_k=\left[\boldsymbol{h}_{1,k}^{\rm T},\ldots,\boldsymbol{h}_{M,k}^{\rm T}\right]^{\rm T}\in\mathbb{C}^{MT}.
\end{align}

Stacking the Fourier coefficients for the $M$ antennas, we obtain the electromagnetic beamformers for the ideal and practical models as
\begin{align}\label{EMBeamformer}
\boldsymbol{\Gamma} &= \mathrm{diag}\{\boldsymbol{\gamma}_1,\cdots,\boldsymbol{\gamma}_M\},\nonumber\\
\boldsymbol{\varLambda}(\boldsymbol{\overline{t}}) &= \mathrm{diag}\{\boldsymbol{\zeta}_1(\boldsymbol{t}_1),\cdots,\boldsymbol{\zeta}_M(\boldsymbol{t}_M)\},
\end{align}
 where $\boldsymbol{\overline{t}}\triangleq[\boldsymbol{t}_1^{\rm T},\boldsymbol{t}_2^{\rm T},\cdots,\boldsymbol{t}_M^{\rm T}]^{\rm T}$.

With the Fourier-domain channel in \eqref{modalchannel} and the ideal electromagnetic beamformer in \eqref{EMBeamformer}, the received signal for the $k$th user can be expressed as 
\begin{align}\label{eq:received}
	y_k=\boldsymbol{\overline{h}}_k^{\rm H}\boldsymbol{\Gamma}\boldsymbol{w}_k s_k+\sum_{i\neq k}\boldsymbol{\overline{h}}_k^{\rm H}
   \boldsymbol{\Gamma}\boldsymbol{w}_i s_i+n_k,
\end{align}
where $n_k\sim\mathcal{CN}(0,\sigma^2)$ denotes the additive white Gaussian noise with power $\sigma^2$. By replacing $\boldsymbol{\Gamma}$ with $\boldsymbol{\varLambda}(\boldsymbol{\overline{t}})$, we can obtain the received signal under the practical electromagnetic beamformer.

\subsection{Problem Formulation} \label{subsec:mb-common-modal}
\subsubsection{Ideal PRA Model} From \eqref{eq:received}, the signal-to-interference-plus-noise ratio (SINR) of the $k$th user under the ideal PRA model is expressed as
\begin{align}\label{SINR}
\varUpsilon_k(\boldsymbol{W},\boldsymbol{\Gamma}) =\frac{|\boldsymbol{\overline{h}}_k^{\rm H}\boldsymbol{\Gamma}\boldsymbol{w}_k|^2}
	 {\sum_{i\neq k}|\boldsymbol{\overline{h}}_k^{\rm H}\boldsymbol{\Gamma}\boldsymbol{w}_i|^2+\sigma^2}.
\end{align}
Based on \eqref{SINR}, the WSR of $K$ users can be expressed as 
\begin{align}\label{sumrate}
	S(\boldsymbol{W},\boldsymbol{\Gamma}) = \sum_{k=1}^{K}\chi_k\log_2(1+\varUpsilon_k(\boldsymbol{W},\boldsymbol{\Gamma})),
\end{align}
where $\chi_k$ is the weighting coefficient of the $k$th user.

According to the ideal PRA model in Section~\ref{IdealModel}, the feasible set of $\boldsymbol{\gamma}_m$ is
\begin{align}\label{feasibleset}
	\mathcal{T}\!=\!\left\{\boldsymbol\gamma\in\mathbb C^T\big|\|\boldsymbol\gamma\|_2^2\!=\!1, |\gamma_n|\leq  \rho_{n},~n\!=\!-N,\ldots,N\right\}.
\end{align}

We aim to maximize the WSR of $K$ users subject to the feasible constraint of $\boldsymbol{\gamma}_m$ in \eqref{feasibleset} and the power constraint of $\boldsymbol{W}$ in \eqref{eq:mb-transmit-power}, which can be formulated as 
\begin{subequations}\label{prob:mb-ideal}
	\begin{align}
		(\mathcal P_{\rm I})~~\max_{\boldsymbol W,\boldsymbol{\gamma}_m}\quad&S(\boldsymbol{W},\boldsymbol{\Gamma})\label{prob:mb-ideal-a}\\
		\mathrm{s.t.~}\quad~&\|\boldsymbol W\|_{\mathrm F}^2\leq P_0,\label{prob:mb-ideal-b}\\
		&~\boldsymbol\gamma_m\in\mathcal{T},\quad m=1,\ldots,M.\label{prob:mb-ideal-c}
	\end{align}
\end{subequations}
\subsubsection{Practical PRA Model} By replacing $\boldsymbol{\Gamma}$ with $\boldsymbol{\varLambda}(\boldsymbol{\overline{t}})$, we obtain the SINR of the $k$th user and the WSR of the $K$ users under the practical PRA model as $\varUpsilon_k(\boldsymbol{W},\boldsymbol{\varLambda}(\boldsymbol{\overline{t}}))$ and $S(\boldsymbol{W},\boldsymbol{\varLambda}(\boldsymbol{\overline{t}}))$, respectively. Unlike the ideal model, which uses continuous Fourier coefficients, the practical model determines each radiation pattern through the PIN diode state vector $\boldsymbol{t}_m$, where $[\boldsymbol{t}_m]_q\in\{0,1\}$. Replacing the corresponding parts in $(\mathcal P_{\rm I})$, we can obtain the practical-model formulation as
\begin{subequations}\label{mbpra}
	\begin{align}
		(\mathcal P_{\rm P}-1)~~\max_{\boldsymbol W,\boldsymbol{t}_m}\quad&S(\boldsymbol{W},\boldsymbol{\varLambda}(\boldsymbol{\overline{t}}))\label{mbpraa}\\
		\mathrm{s.t.}\quad~&\|\boldsymbol W\|_{\mathrm F}^2\leq P_0,\label{mbprab}\\
		&[\boldsymbol{t}_m]_q\in\{0,1\}, \quad q=1,\ldots,Q.\label{mbprac}
	\end{align}
\end{subequations}

In the next section, we develop a coefficient-bound penalty-based WMMSE (CBP-WMMSE) algorithm for $(\mathcal P_{\rm P}-1)$. The ideal problem $(\mathcal P_{\rm I})$ is the continuous counterpart of $(\mathcal P_{\rm P}-1)$ with the discrete hardware mapping removed and can therefore be solved using the continuous optimization component of the proposed algorithm.

\section{CBP-WMMSE Algorithm}
\label{sec:mb-algorithm}

In this section, we develop a CBP-WMMSE algorithm for solving $(\mathcal P_{\rm P}-1)$. By introducing  auxiliary continuous Fourier-coefficient variables, the algorithm decomposes the mixed-binary design into continuous Fourier coefficient optimization and discrete switch-state optimization. For the discrete optimization, we propose an OBNS that evaluates all single-bit neighbors of the current state and exploits the Sherman-Morrison identity to update the inverse of the loaded multiport impedance matrix.

\subsection{Problem Reformulation}
\label{subsec:mb-rwmmse}
In $(\mathcal{P}_{\rm P}-1)$, the binary variables of PIN diodes are involved in the WSR objective. Therefore, the optimization involves the coupling of discrete and continuous variables. To relax this coupling, we introduce  auxiliary continuous Fourier-coefficient variables
\begin{align}\label{etaCons}
\boldsymbol{\eta}_m = \boldsymbol{\zeta}_m(\boldsymbol{t}_m)\in\mathbb{C}^{T}.
\end{align}
$\boldsymbol{\eta}_m\triangleq [\eta_{m,-N},\cdots,\eta_{m,N}]^{\rm T}$ collects the Fourier coefficients of the $m$th PRA. Since $\boldsymbol\eta_m$ is a continuous surrogate for a practical Fourier-coefficient vector, its component amplitudes are bounded by those supported by the hardware. Let $\overline\rho_n$ denote the maximum achievable amplitude of order $n$, which can be obtained through offline characterization of the RPA, i.e., $\overline\rho_n\triangleq\max_{\boldsymbol t_m\in\{0,1\}^{Q}}|\zeta_{m,n}(\boldsymbol t_m)|$. The feasible set of $\boldsymbol\eta_m$ is then
\begin{align}
	\overline{\mathcal T}\!\triangleq\!\left\{\boldsymbol\eta\in\mathbb C^T\middle|\|\boldsymbol\eta\|_2^2\!=\!1,~|\eta_n|\!\leq\!\overline\rho_n,~n\!=\!-N,\ldots,N\right\},
\end{align}
where $\eta_n\triangleq [\boldsymbol\eta]_{n+N+1}$. Define $\boldsymbol{\varXi}\triangleq\mathrm{diag}\{\boldsymbol{\eta}_1,\boldsymbol{\eta}_2,\cdots,\boldsymbol{\eta}_M\}$.  Then, $(\mathcal P_{\rm P}-1)$ can be equivalently written as
\begin{subequations}\label{mbpra2}
	\begin{align}
		(\mathcal P_{\rm P}-2)~~\max_{\boldsymbol W,\boldsymbol{\varXi},\boldsymbol{t}_m}\quad&S(\boldsymbol{W},\boldsymbol{\varXi})\label{mbpraa2}\\
		\mathrm{s.t.~}\quad~~&\eqref{mbprab},~\eqref{mbprac},~\eqref{etaCons},~\boldsymbol{\eta}_m\in\overline{\mathcal T}.
	\end{align}
\end{subequations}
 
According to the full-power property established in~\cite{TSP23ZXT}, any nontrivial stationary solution uses the entire power budget. Then, we can discard the power constraint  and convert $(\mathcal P_{\rm P}-2)$ to
\begin{subequations}\label{mbpra3}
	\begin{align}
		(\mathcal P_{\rm P}-3)~~\max_{\boldsymbol W,\boldsymbol{\varXi},\boldsymbol{t}_m}\quad&\overline{S}(\boldsymbol{W},\boldsymbol{\varXi})\label{mbpraa3}\\
		\mathrm{s.t.~}\quad~~&~\eqref{mbprac},~\eqref{etaCons},~\boldsymbol{\eta}_m\in\overline{\mathcal T},
	\end{align}
\end{subequations}
where 
\begin{align}\label{SINR2}
	&\overline{S}(\boldsymbol{W},\boldsymbol{\varXi}) = \sum_{k=1}^{K}\chi_k\log_2(1+\overline{\varUpsilon}_k(\boldsymbol{W},\boldsymbol{\varXi})),\nonumber\\
	&\overline{\varUpsilon}_k(\boldsymbol{W},\boldsymbol{\varXi}) =\frac{|\boldsymbol{\overline{h}}_k^{\rm H}\boldsymbol{\varXi}\boldsymbol{w}_k|^2}
	{\sum_{i\neq k}|\boldsymbol{\overline{h}}_k^{\rm H}\boldsymbol{\varXi}\boldsymbol{w}_i|^2+\frac{\sigma^2}{P_0}\|\boldsymbol W\|_{\rm F}^2}.
\end{align}
Note that the solutions of $(\mathcal P_{\rm P}-2)$ and $(\mathcal P_{\rm P}-3)$ are connected with a scalar. After solving $(\mathcal P_{\rm P}-3)$, we scale its beamforming solution to satisfy the power constraint $P_0$, thereby obtaining a solution to $(\mathcal P_{\rm P}-2)$.

 $(\mathcal P_{\rm P}\text{-}3)$ remains nonconvex because of the WSR objective in \eqref{mbpraa3} and the hardware equality in \eqref{etaCons}. To handle the WSR objective, define the effective channel $\boldsymbol q_k\triangleq\boldsymbol{\varXi}^{\rm H}\overline{\boldsymbol h}_k\in\mathbb C^M$, and let $u_k\in\mathbb C$ and $v_k>0$ denote the receive equalizer and mean-square error (MSE) weight for the $k$th user, respectively. The MSE is
\begin{align}\label{ek}
	e_k\!=\!|u_k|^2\left(\sum_{i=1}^{K}|\boldsymbol q_k^{\rm H}\boldsymbol w_i|^2+\frac{\sigma^2}{P_0}\|\boldsymbol W\|_{\rm F}^2\right)\!-\!2\Re\!\left\{u_k^*\boldsymbol q_k^{\rm H}\boldsymbol w_k\right\}+1.
\end{align}
Let $\boldsymbol u\triangleq[u_1,\ldots,u_K]^{\rm T}$ and $\boldsymbol v\triangleq[v_1,\ldots,v_K]^{\rm T}$. Maximizing \eqref{mbpraa3} is equivalent to minimizing
\begin{align}\label{equivalence}
	\Phi(\boldsymbol W,\boldsymbol{\varXi},\boldsymbol u,\boldsymbol v)\triangleq\sum_{k=1}^{K}\chi_k\left(v_ke_k-\ln v_k\right).
\end{align}

To address the nonconvex hardware equality constraint, we incorporate it into the objective through the penalty term
\begin{align}\label{Mini}
	\mu\sum_{m=1}^{M}\left\|\delta_m\boldsymbol\eta_m-\boldsymbol\zeta_m(\boldsymbol t_m)\right\|_2^2,
\end{align}
where $\mu>0$ is the penalty coefficient and $|\delta_m|=1$. The phase rotation $\delta_m$ permits phase-invariant alignment between the continuous and practical Fourier-coefficient vectors. Because this per-antenna phase can be absorbed into the corresponding row of the digital beamformer, it does not change the achievable system performance. The penalized objective is
\begin{align}\label{Lrho}
	\mathcal L_{\mu}\triangleq\Phi(\boldsymbol W,\boldsymbol{\varXi},\boldsymbol u,\boldsymbol v)+\mu\sum_{m=1}^{M}\left\|\delta_m\boldsymbol\eta_m-\boldsymbol\zeta_m(\boldsymbol t_m)\right\|_2^2.
\end{align}
Consequently, the penalized problem is
\begin{align}
	(\mathcal P_{\rm P}-4)~&\min_{\boldsymbol W,\boldsymbol{\varXi},\overline{\boldsymbol t},\boldsymbol u,\boldsymbol v,\boldsymbol\delta}\quad\mathcal L_{\mu}\nonumber\\
	&\mathrm{~~~~~s.t.}~\eqref{mbprac},~\boldsymbol\eta_m\in\overline{\mathcal T},~v_k>0,~|\delta_m|=1,
\end{align}
where $\boldsymbol\delta\triangleq[\delta_1,\ldots,\delta_M]^{\rm T}$.

Note that for a fixed $\mu$, $(\mathcal P_{\rm P}-4)$ has a tractable block structure when the other variables are fixed. We therefore cyclically update $\boldsymbol u$, $\boldsymbol v$, $\boldsymbol W$, $\boldsymbol{\varXi}$, $\overline{\boldsymbol t}$, and $\boldsymbol\delta$.  Moreover, a finite penalty coefficient may not sufficiently enforce $\boldsymbol{\eta}_m=\boldsymbol{\zeta}_m(\boldsymbol t_m)$, whereas an excessively large initial penalty coefficient may cause the hardware-consistency term to dominate the WMMSE objective and thereby hinder convergence to a high-quality solution. Following the penalty-decomposition principle~\cite{Shi2020PDD}, we therefore initialize $\mu$ with a small positive value and progressively increase it. Accordingly, $(\mathcal P_{\rm P}-4)$ is solved using a two-loop procedure, in which the inner loop alternately updates the variable blocks for a fixed $\mu$, while the outer loop increases $\mu$ until the hardware-consistency residual satisfies a prescribed tolerance.

\subsection{Inner-Loop Variable Updates}
\subsubsection{Optimization of $\boldsymbol{u}$} With all other variables fixed, $(\mathcal P_{\rm P}-4)$ reduces to
\begin{align}\label{uk}
	\min_{u_k}~\sum_{k=1}^{K}\chi_k\left(v_ke_k-\log v_k\right).
\end{align}
From \eqref{uk}, the optimal $u_k$ is
\begin{align}\label{eq:u}
	\widehat u_k=\frac{\boldsymbol q_k^{\rm H}\boldsymbol w_k}{\sum_{i=1}^{K}|\boldsymbol q_k^{\rm H}\boldsymbol w_i|^2+\frac{\sigma^2}{P_0}\|\boldsymbol W\|_{\rm F}^2}.
\end{align}
\subsubsection{Optimization of $v_k$}
With the remaining variables fixed and $u_k=\widehat u_k$, the optimal MSE weight is
\begin{align}\label{eq:v}
	\widehat v_k=1/\widehat e_k,
\end{align}
where $\widehat e_k$ denotes \eqref{ek} evaluated at $u_k=\widehat u_k$.

\subsubsection{Optimization of $\boldsymbol W$}
We next fix the other variables and optimize $\boldsymbol W$. Define
\begin{align}
	\boldsymbol Q&\triangleq[\boldsymbol q_1,\ldots,\boldsymbol q_K],\qquad \boldsymbol G\triangleq\boldsymbol Q^{\rm H}\boldsymbol Q\in\mathbb C^{K\times K}.
\end{align}
Moreover, define
\begin{align}\label{eq:rwmmse-definitions}
	d_k&\triangleq\chi_kv_k|u_k|^2,\qquad c_k\triangleq\chi_kv_ku_k,\qquad \tau\triangleq\frac{\sigma^2}{P_0}\sum_{k=1}^{K}d_k,\nonumber\\
	\boldsymbol D&\triangleq\operatorname{diag}\{d_1,\ldots,d_K\},\qquad \boldsymbol \Omega\triangleq\operatorname{diag}\{c_1,\ldots,c_K\}.
\end{align}
The terms in \eqref{ek} that depend on $\boldsymbol W$ satisfy
\begin{align}\label{eq:W-quadratic-identities}
	\sum_{k=1}^{K}d_k\sum_{i=1}^{K}|\boldsymbol q_k^{\rm H}\boldsymbol w_i|^2&=\operatorname{Tr}\!\left(\boldsymbol W^{\rm H}\boldsymbol Q\boldsymbol D\boldsymbol Q^{\rm H}\boldsymbol W\right),\nonumber\\
	\frac{\sigma^2}{P_0}\sum_{k=1}^{K}d_k\|\boldsymbol W\|_{\rm F}^2&=\operatorname{Tr}\!\left(\boldsymbol W^{\rm H}\tau\boldsymbol I_M\boldsymbol W\right),\nonumber\\
	-2\Re\!\left\{\sum_{k=1}^{K}\chi_kv_ku_k^*\boldsymbol q_k^{\rm H}\boldsymbol w_k\right\}&=-2\Re\!\left\{\operatorname{Tr}\!\left(\boldsymbol{\Omega}^{\rm H}\boldsymbol Q^{\rm H}\boldsymbol W\right)\right\}.
\end{align}
After removing terms independent of $\boldsymbol W$, we have
\begin{align}\label{eq:W-subproblem}
	\min_{\boldsymbol W}\quad&\operatorname{Tr}\!\left[\boldsymbol W^{\rm H}(\boldsymbol Q\boldsymbol D\boldsymbol Q^{\rm H}+\tau\boldsymbol I_M)\boldsymbol W\right]-2\Re\!\left\{\operatorname{Tr}\!\left(\boldsymbol{\Omega}^{\rm H}\boldsymbol Q^{\rm H}\boldsymbol W\right)\right\}.
\end{align}
For $\tau>0$, \eqref{eq:W-subproblem} is a strictly convex quadratic problem with the unique minimizer
\begin{align}\label{eq:W-update}
	\widehat{\boldsymbol W}&=(\boldsymbol Q\boldsymbol D\boldsymbol Q^{\rm H}+\tau\boldsymbol I_M)^{-1}\boldsymbol Q\boldsymbol{\Omega}\nonumber\\
	&\overset{\rm (a)}{=}\boldsymbol Q(\tau\boldsymbol I_K+\boldsymbol D\boldsymbol G)^{-1}\boldsymbol{\Omega},
\end{align}
where $\rm (a)$ follows from the push-through identity~\cite{Boyd2018Introduction}. Note that \eqref{eq:W-update} requires only a $K\times K$ matrix inversion.

\subsubsection{Optimization of $\boldsymbol{\varXi}$}
Stack the continuous Fourier coefficients as
\begin{align}
	\overline{\boldsymbol\eta}\triangleq[\boldsymbol\eta_1^{\rm T},\ldots,\boldsymbol\eta_M^{\rm T}]^{\rm T}\in\mathbb C^{MT},
\end{align}
and define
\begin{align}
	\boldsymbol E_i&\triangleq\operatorname{diag}\{\boldsymbol w_i^*\}\otimes\boldsymbol I_T\in\mathbb C^{MT\times MT},~\boldsymbol a_{k,i}\triangleq\boldsymbol E_i\overline{\boldsymbol h}_k\in\mathbb C^{MT},\nonumber\\
	\boldsymbol\Delta&\triangleq\operatorname{diag}\{\delta_1,\ldots,\delta_M\}\otimes\boldsymbol I_T,\nonumber\\
	\overline{\boldsymbol\zeta}(\boldsymbol{\overline{t}})&\triangleq[\boldsymbol\zeta_1^{\rm T}(\boldsymbol t_1),\ldots,\boldsymbol\zeta_M^{\rm T}(\boldsymbol t_M)]^{\rm T}.
\end{align}
With these definitions, $\overline{\boldsymbol h}_k^{\rm H}\boldsymbol{\varXi}\boldsymbol w_i=\boldsymbol a_{k,i}^{\rm H}\overline{\boldsymbol\eta}$. After removing terms independent of $\overline{\boldsymbol\eta}$, the Fourier-coefficient subproblem~is
\begin{align}\label{eq:eta-subproblem}
	\min_{\overline{\boldsymbol\eta}}\quad&\overline{\boldsymbol\eta}^{\rm H}\boldsymbol A\overline{\boldsymbol\eta}-2\Re\!\left\{\boldsymbol c^{\rm H}\overline{\boldsymbol\eta}\right\}\nonumber\\
	\mathrm{s.t.}\quad&\boldsymbol\eta_m\in\overline{\mathcal T},\quad m=1,\ldots,M,
\end{align}
where
\begin{align}\label{eq:Aeta-beta}
	\boldsymbol A&\triangleq\sum_{k=1}^{K}\chi_kv_k|u_k|^2\sum_{i=1}^{K}\boldsymbol a_{k,i}\boldsymbol a_{k,i}^{\rm H}+\mu\boldsymbol\Delta^{\rm H}\boldsymbol\Delta,\nonumber\\
	\boldsymbol c&\triangleq\sum_{k=1}^{K}\chi_kv_ku_k\boldsymbol a_{k,k}+\mu\boldsymbol\Delta^{\rm H}\overline{\boldsymbol\zeta}(\boldsymbol{\overline{t}}).
\end{align}
The unit-norm constraints in $\overline{\mathcal T}$ form a product of $M$ complex spheres, while the per-order bounds further restrict the feasible Fourier-coefficient amplitudes. Therefore, \eqref{eq:eta-subproblem} constitutes an inequality-constrained optimization problem over a product-sphere manifold. We solve it using a projected Riemannian gradient method. Let 
\begin{align}\label{eq:eta-egrad}
	\boldsymbol g=2(\boldsymbol A\boldsymbol{\overline{\eta}}-\boldsymbol c)
	=[\boldsymbol g_1^{\rm T},\ldots,\boldsymbol g_M^{\rm T}]^{\rm T}
\end{align}
denote the Euclidean gradient. Its projection onto the tangent space of the $m$th complex sphere is
\begin{align}\label{eq:eta-rgrad}
	\boldsymbol{\overline{g}}_m=\boldsymbol g_m-\Re\!\left\{\boldsymbol\eta_m^{\rm H}\boldsymbol g_m\right\}\boldsymbol\eta_m.
\end{align}
For a step size $\kappa>0$, the tentative point $\boldsymbol\eta_m-\kappa\boldsymbol{\overline{g}}_m$ is projected onto $\overline{\mathcal T}$. At each iteration, $\kappa$ is selected via Armijo backtracking, and the projected tentative point is accepted when it satisfies the sufficient-decrease condition. This procedure is repeated until convergence.
\subsubsection{Optimization of $\boldsymbol{t}_m$ and $\delta_m$}

For fixed $\boldsymbol\eta_m$,  $\boldsymbol t_m$ and $\delta_m$ are decoupled across the PRAs. The $m$th subproblem is
\begin{align}\label{eq:t-delta-subproblem}
	\min_{\boldsymbol t_m,\delta_m}~~&
	\|\delta_m\boldsymbol\eta_m-\boldsymbol\zeta_m(\boldsymbol t_m)\|_2^2\nonumber\\
	\mathrm{s.t.}~~~~&[\boldsymbol t_m]_q\in\{0,1\},~~q=1,\ldots,Q,~~|\delta_m|=1.
\end{align}
Since both $\boldsymbol\eta_m$ and $\boldsymbol\zeta_m(\boldsymbol t_m)$ have unit norm, minimizing \eqref{eq:t-delta-subproblem} over $\delta_m$ for a given $\boldsymbol t_m$ gives
\begin{align}\label{eq:delta-update}
	\widehat\delta_m =\dfrac{\boldsymbol\eta_m^{\rm H}\boldsymbol\zeta_m(\boldsymbol t_m)}
	{|\boldsymbol\eta_m^{\rm H}\boldsymbol\zeta_m(\boldsymbol t_m)|}. 
\end{align}
Then, the minimized residual is
\begin{align}\label{eq:phase-invariant-residual}
	2-2|\boldsymbol\eta_m^{\rm H}\boldsymbol\zeta_m(\boldsymbol t_m)|.
\end{align}
Consequently, the optimization of $\boldsymbol{t}_m$ is equivalently given by
\begin{align}\label{eq:t-correlation}
	\max_{\boldsymbol t_m\in\{0,1\}^{Q}} |\boldsymbol\eta_m^{\rm H}\boldsymbol\zeta_m(\boldsymbol t_m)|.
\end{align}

Exhaustive search for \eqref{eq:t-correlation} evaluates all $2^Q$ switch states and is therefore impractical for large $Q$. We instead propose an OBNS method. Given the current state $\boldsymbol t_m$, define the $q$th one-bit neighbor by
\begin{align}\label{eq:one-flip-neighbor}
	\boldsymbol{\widetilde{t}}_m^{(q)}	\triangleq\boldsymbol t_m+
	(1-2[\boldsymbol t_m]_q)\boldsymbol e_q,\qquad q=1,\ldots,Q,
\end{align}
and set $\boldsymbol{\widetilde{t}}_m^{(0)}=\boldsymbol t_m$. $\boldsymbol e_q$ denotes the $q$th basis vector, whose $q$th entry is one and all other entries are zero. It is easy to verify that the $q$th entries of $\boldsymbol{\widetilde{t}}_m^{(q)}$ and $\boldsymbol t_m$ are complementary, whereas all other entries are identical.  The best neighbor is selected according to
\begin{align}\label{eq:one-flip-update}
	q_m^{\star}\in\arg\max_{q\in\{0,1,\ldots,Q\}}
	\left|\boldsymbol\eta_m^{\rm H}
	\boldsymbol\zeta_m(\boldsymbol{\widetilde{t}}_m^{(q)})\right|.
\end{align}
If $q_m^{\star}\neq0$ and the objective is strictly improved, we set $\boldsymbol t_m\leftarrow\boldsymbol{\widetilde{t}}_m^{(q_m^{\star})}$ and repeat \eqref{eq:one-flip-update}; otherwise, the search terminates. Because the current state is included in the candidate set, the objective in \eqref{eq:t-delta-subproblem} does not increase after any accepted flip. When the OBNS method terminates, we denote the resulting switch state by $\widehat{\boldsymbol t}_m$.

One potential drawback of the OBNS method is the repeated evaluation of $\boldsymbol\zeta_m(\boldsymbol{\widetilde{t}}_m^{(q)})$, which involves an inversion of a $Q\times Q$ matrix in \eqref{Ip} and incurs a high computational complexity of $\mathcal{O}(Q^3)$. To accelerate the evaluation, we exploit the diagonal structure of the switch-load perturbation to reduce the computational complexity. Define
\begin{align}
	\boldsymbol J(\boldsymbol t_m)\triangleq
	\boldsymbol Z_{\rm PP}+\boldsymbol Z_{\rm L}(\boldsymbol t_m).
\end{align}
Flipping the $q$th switch changes only one diagonal entry, and thus the candidate loaded-impedance matrix is
\begin{align}
	\boldsymbol J(\boldsymbol{\widetilde{t}}_m^{(q)})=\boldsymbol J(\boldsymbol t_m)+
	\Delta_q \boldsymbol e_q\boldsymbol e_q^{\rm T},
\end{align}
where $\Delta_q$ denotes the corresponding change in the load impedance. According to the Sherman--Morrison identity, we have
\begin{align}\label{eq:woodbury-switch}
		\boldsymbol J(\boldsymbol{\widetilde{t}}_m^{(q)})^{-1}=\boldsymbol J(\boldsymbol t_m)^{-1}-
	\frac{\Delta_{q}\boldsymbol J(\boldsymbol t_m)^{-1}\boldsymbol e_q
		\boldsymbol e_q^{\rm T}\boldsymbol J(\boldsymbol t_m)^{-1}}
	{1+\Delta_{q}\boldsymbol e_q^{\rm T}
		\boldsymbol J(\boldsymbol t_m)^{-1}\boldsymbol e_q}.
\end{align}
Note that $\boldsymbol J(\boldsymbol t_m)^{-1}$ has been obtained in the previous iteration. Therefore, no new matrix inversion is required for $\boldsymbol J(\boldsymbol{\widetilde{t}}_m^{(q)})^{-1}$. In addition, the calculation of $\boldsymbol J(\boldsymbol{\widetilde{t}}_m^{(q)})^{-1}$ involves sparse vectors $\boldsymbol{e}_q$ and thus can be exploited to further improve the calculation efficiency. After the OBNS terminates, $\delta_m$ is updated using \eqref{eq:delta-update}.

\subsection{Outer-Loop Updates and Stop Criteria}

For a fixed penalty coefficient $\mu$, the inner loop cyclically updates the communication block $\{\boldsymbol W,u_k,v_k\}$, the continuous Fourier-coefficient block $\boldsymbol{\varXi}$, and the practical-state block $\{\boldsymbol t_m,\delta_m\}$. The inner loop terminates when the variation of the objective in \eqref{Lrho} between adjacent iterations is less than a predefined threshold $\epsilon_{\rm in}$ or when the prescribed maximum number of inner iterations is reached.

For the outer iteration, define the hardware-consistency residual as
\begin{align}\label{eq:hardware-residual}
	  \varrho\triangleq \sum_{m=1}^{M}\left\|\delta_m\boldsymbol\eta_m-\boldsymbol\zeta_m(\boldsymbol t_m)\right\|_2^2.
\end{align}
If $\varrho>\epsilon_{\rm out}$, the penalty coefficient is increased according to $\mu \leftarrow \beta\mu$, where $\beta$ is the penalty scaling factor. The outer loop terminates when $\varrho\leq\epsilon_{\rm out}$ or when the maximum number of outer iterations is reached.

After the CBP-WMMSE algorithm terminates, we normalize $\widehat{\boldsymbol{W}}$ via
\begin{align}\label{Ww}
	\widetilde{\boldsymbol{W}} = \sqrt{P_0}\widehat{\boldsymbol{W}}/\|\widehat{\boldsymbol{W}}\|_{\rm F}.
\end{align}
Then, we absorb the phase rotation $\delta_m$ into $\widetilde{\boldsymbol{W}}$ and obtain the final digital beamformer as
\begin{align}\label{Phase}
	\overline{\overline{\boldsymbol{W}}} = \mathrm{diag}\{\delta_1^*,\delta_2^*,\cdots,\delta_M^*\}\widetilde{\boldsymbol{W}}.
\end{align}
Finally, we summarize the complete procedure of the CBP-WMMSE algorithm in \textbf{Algorithm~\ref{alg:modal-bound-pdd-rwmmse}}.

\begin{algorithm}[!t]
	\caption{CBP-WMMSE Algorithm}
	\label{alg:modal-bound-pdd-rwmmse}
	\begin{algorithmic}[1]
		\STATE \textbf{Input:} $\overline{\boldsymbol h}_k$, $\chi_k$, $\boldsymbol Z_{\rm PP}$, $\boldsymbol z_{\rm PA}$, $b_q(\theta)$, $\boldsymbol C$, $P_0$, $M$, $Q$, $K$, $\epsilon_{\rm in}$, $\epsilon_{\rm out}$, $\beta$, $I_{\rm in}$, and $I_{\rm out}$.
		\STATE Initialize $\boldsymbol W$, $\boldsymbol\eta_m$, $\boldsymbol t_m$, and $\delta_m$.
		\WHILE{the stop conditions are not satisfied}
		\STATE $\mu\leftarrow\beta\mu$.
		\WHILE{the stop conditions are not satisfied}
		\STATE Update $u_k$ for all $k$ via \eqref{eq:u}.
		\STATE Update $v_k$ for all $k$ via \eqref{eq:v}.
		\STATE Update $\boldsymbol{W}$ via \eqref{eq:W-update}.
		\STATE Update $\boldsymbol{\eta}_m$ for all $m$ by solving \eqref{eq:eta-subproblem}.
		\STATE Update $\delta_m$ and $\boldsymbol{t}_m$ for all $m$ by solving \eqref{eq:t-delta-subproblem}.
		\ENDWHILE
		\ENDWHILE
		\STATE Normalize and update $\boldsymbol{W}$ via \eqref{Ww} and \eqref{Phase}.
		\STATE \textbf{Output:} $\overline{\overline{\boldsymbol{W}}}$ and $\widehat{\boldsymbol{t}}_m$.
	\end{algorithmic}
\end{algorithm}

\subsection{Convergence Analysis}

We first consider the inner loop for a fixed $\mu$. The closed-form updates in \eqref{eq:u}, \eqref{eq:v}, and \eqref{eq:W-update} minimize their respective subproblems and hence do not increase $\mathcal L_{\mu}$. The projected Riemannian gradient update of $\boldsymbol{\varXi}$ satisfies the Armijo condition, while the one-flip update in \eqref{eq:one-flip-update} accepts a switch state only when it decreases \eqref{eq:t-delta-subproblem}. Moreover, \eqref{eq:delta-update} gives the optimal phase rotation for a fixed switch state. Therefore, the objective generated by the inner loop is non-increasing. Since $\mathcal L_{\mu}$ is bounded below, the inner loop is guaranteed to converge. We next consider the outer loop. According to \textbf{Algorithm~\ref{alg:modal-bound-pdd-rwmmse}}, $\mu$ is successively increased by a factor $\beta>1$, thereby progressively strengthening the penalty on the hardware-consistency residual in \eqref{eq:hardware-residual}. Therefore, the hardware-consistency residual approaches zero as $\mu\rightarrow\infty$ and the hardware constraint in \eqref{etaCons} can be satisfied when the  algorithm terminates.
\subsection{Computational Complexity}

Let $I_{\rm out}$ and $I_{\rm in}$ denote the numbers of outer and inner iterations, respectively. The outer loop only updates the scalar penalty coefficient according to \textbf{Algorithm~\ref{alg:modal-bound-pdd-rwmmse}}, and thus its complexity is $\mathcal O(1)$ per iteration.

Within each inner iteration, the CBP-WMMSE algorithm iteratively updates $u_k$, $v_k$, $\boldsymbol{W}$, $\boldsymbol{\eta}_m$, $\delta_m$, and $\boldsymbol{t}_m$. The updates of $u_k$ and $v_k$ mainly involve matrix multiplication and have a computational complexity of  $\mathcal O(KMT + MK^2)$. For the update of $\boldsymbol W$, forming $\boldsymbol G$ costs $\mathcal O(MK^2)$, while \eqref{eq:W-update} requires one $K\times K$ matrix inversion and the associated matrix multiplications. Therefore, the complexity of updating $\boldsymbol W$ is $\mathcal O(K^3+MK^2)$. For the update of $\boldsymbol{\varXi}$,  each gradient in \eqref{eq:eta-egrad} is evaluated in $\mathcal O(K^2MT)$ operations. Denoting the number of projected Riemannian gradient steps by $I_{\rm R}$, this block has complexity $\mathcal O(I_{\rm R}K^2MT)$. For the update of $\delta_m$, it has closed-form solution in \eqref{eq:delta-update} and mainly involve matrix multiplication with computational complexity $\mathcal{O}(T)$. For the update of $\boldsymbol{t}_m$, one round of the OBNS evaluates $Q$ neighbors for each PRA. Using \eqref{eq:woodbury-switch} and caching the current inverse, the complexity of one round is $\mathcal O(Q^2+QT)$ per PRA. Hence, $I_{\rm F}$ one-flip rounds for all PRAs cost $\mathcal O(I_{\rm F}M(Q^2+QT))$.

For the typical setting with $M>\max\{K,T\}$ and $Q>T$, the lower-order terms can be omitted. Combining the above variable updates, the overall computational complexity is
\begin{align}\label{eq:overall-complexity}
	\mathcal O\left(I_{\rm out}I_{\rm in}M
	\left(I_{\rm R}K^2T+I_{\rm F}Q^2\right)
	\right).
\end{align}
 
\subsection{Extensions}
The proposed algorithm can be readily extended to optimization under the ideal PRA model. Specifically, after removing the hardware-consistency constraint \eqref{etaCons} from \eqref{mbpra2} and replacing $\overline{\mathcal T}$ with the ideal feasible set $\mathcal T$ in \eqref{feasibleset},  $\boldsymbol{\eta}_m$ is independent of any specific hardware realization. Thus, $(\mathcal P_{\rm P}-2)$ shares the same problem structure as $(\mathcal P_{\rm I})$. By replacing $\boldsymbol{\eta}_m$ and $\boldsymbol{\varXi}$ with $\boldsymbol{\gamma}_m$ and $\boldsymbol{\Gamma}$, respectively, the remaining updates of $u_k$, $v_k$, $\boldsymbol W$, and $\boldsymbol{\gamma}_m$ provide an algorithm for solving the ideal-model problem $(\mathcal P_{\rm I})$.

More generally, the proposed algorithm establishes a unified framework that separates the original mixed-binary optimization into a continuous beamforming problem and a discrete hardware-approximation problem. The continuous part can incorporate existing beamforming methods developed for different objectives and system constraints~\cite{wang2026sparse}, while the Fourier coefficients $\boldsymbol{\eta}_m$ can be optimized using the projected Riemannian gradient method. The discrete part approximates the optimized Fourier coefficients using realizable switch states and can be efficiently handled by the proposed OBNS method. Since these two parts are connected only through the Fourier coefficients, they can be adapted independently. Therefore, the proposed framework can be extended to different beamforming objectives and PRA implementations without changing its overall decomposition structure.

\section{Simulation Results}\label{SR}

We compare the ideal and practical PRA models and evaluate the proposed CBP-WMMSE algorithm. Unless otherwise stated, the BS employs a half-wavelength uniform linear array with $M=32$ antennas and serves $K=4$ single-antenna users. Each user channel contains $L=6$ paths. The three RPA designs in Fig.~\ref{fig_image3} are considered under both the practical model and their corresponding ideal models. For the position-reconfigurable benchmark, the $M$ antennas are placed within a linear region of length $M\lambda$, subject to a minimum spacing of $\lambda/2$. The conventional antenna array has fixed antenna positions and radiation patterns.

\begin{figure}[!t]
	\centering
	\includegraphics[width=70mm]{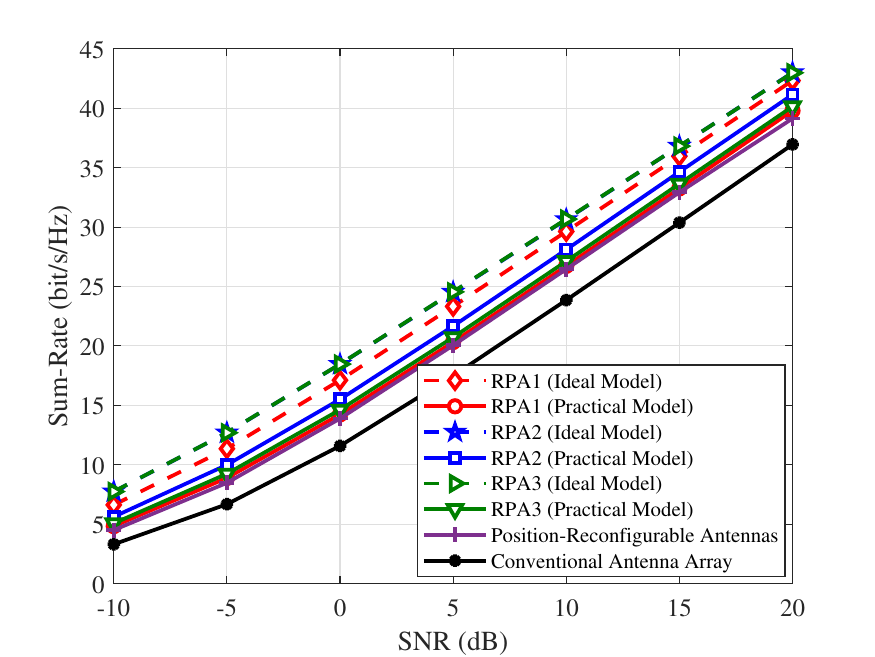}
	\caption{Sum rate versus SNR for different antenna models.}
	\label{fig_image8}
\end{figure}

Fig.~\ref{fig_image8} compares the sum-rates achieved by the different antenna models versus SNR. The ideal PRA models outperform the practical PRA models, which in turn outperform the position-reconfigurable antennas and the conventional antenna array. This ordering reflects the progressively reduced channel-reconfiguration flexibility of these models. The ideal models of RPA~2 and RPA~3 achieve nearly identical performance because the two designs have the same antenna aperture size. Among the practical models, RPA~2 achieves the highest sum rate, followed by RPA~3 and RPA~1. Although RPA~1 has more PIN diodes than RPA~3, its smaller aperture limits its pattern-reconfiguration capability. RPA~2 provides the best performance because it combines a larger aperture with a larger number of PIN diodes.

\begin{figure}[!t]
	\centering
	\includegraphics[width=70mm]{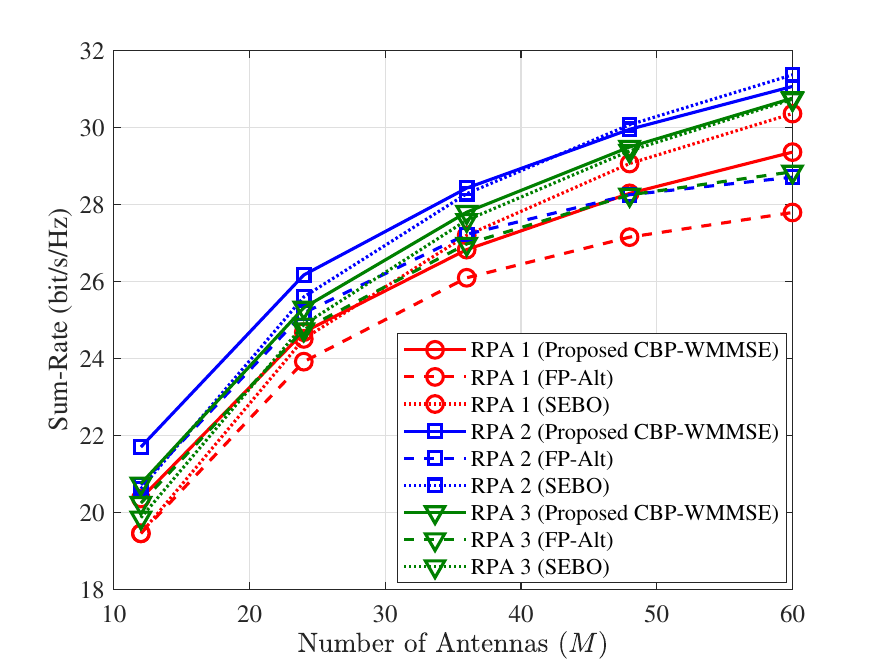}
	\caption{Sum rate versus the number of antennas for different practical-model optimization algorithms.}
	\label{fig_image9}
\end{figure}

\begin{figure}[!t]
	\centering
	\includegraphics[width=70mm]{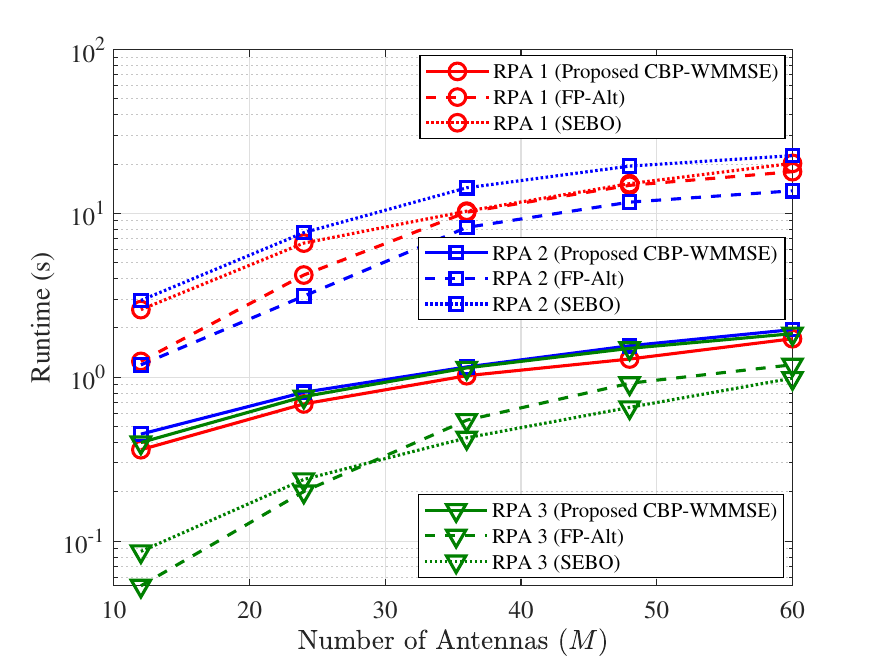}
	\caption{Runtime versus the number of antennas for different practical-model optimization algorithms.}
	\label{fig_image10}
\end{figure}

Figs.~\ref{fig_image9} and~\ref{fig_image10} compare the sum-rate performance and runtime of the proposed algorithm with the fractional-programming-based alternating optimization (FP-Alt) algorithm in~\cite{Li2026AntennaCoding} and the successive exhaustive Boolean optimization (SEBO) algorithm in~\cite{Shen2026AntennaCoding}. The SEBO block size is set to one to reduce the simulation time, and WMMSE beamforming is performed during each SEBO search to maintain high-quality sum-rate solutions. The runtime is measured on an Intel Core Ultra~5 125H processor without parallel computing.

As shown in Fig.~\ref{fig_image9}, the three algorithms achieve generally comparable sum rates, and their curves occasionally intersect as $M$ increases. This is reasonable because the joint beamforming and switch-state design is a nonconvex mixed-binary problem, and the algorithms may reach different locally optimal solutions. Nevertheless, the proposed algorithm provides competitive performance for all tested antenna numbers and RPA designs. In particular, it achieves the highest sum rate for RPA~3 throughout the considered range and provides the best result at most antenna numbers for RPA~2.

Fig.~\ref{fig_image10} shows that, for RPA~3 with only $Q=12$ switches, the proposed algorithm is slower than FP-Alt and SEBO because the additional continuous Fourier-coefficient optimization introduces noticeable overhead. When the number of switches increases to $Q=60$, as in RPA~1 and RPA~2, the advantage of the proposed algorithm becomes evident. Its runtime increases only moderately, whereas the runtimes of FP-Alt and SEBO increase substantially. For RPA~2 with $M=60$, the proposed algorithm requires approximately 1/12 of the runtime of SEBO, demonstrating its efficiency for densely switched RPAs.

\begin{figure}[!t]
	\centering
	\includegraphics[width=70mm]{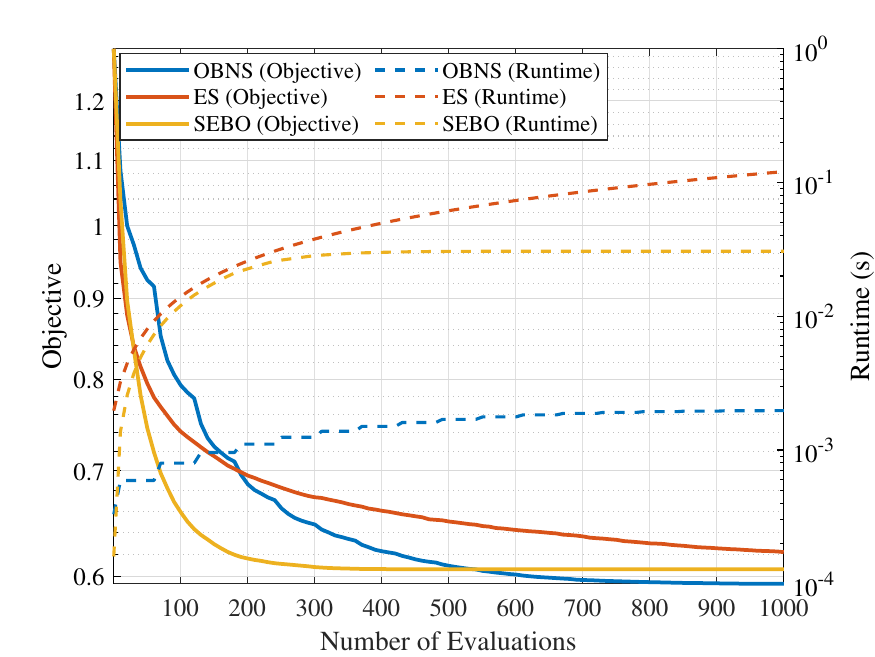}
	\caption{Objective value and runtime versus the number of switch-state evaluations for RPA~2.}
	\label{fig_image11}
\end{figure}

The discrete search is critical to the optimization of practical RPAs. Fig.~\ref{fig_image11} therefore compares the proposed OBNS, SEBO, and exhaustive search (ES) for minimizing the residual in \eqref{eq:phase-invariant-residual} for a single RPA~2. The runtime is measured on the same computing platform described above. OBNS initially decreases the objective more slowly than SEBO because it compares the complete one-bit neighborhood before updating the switch state. However, it eventually attains the lowest objective, indicating that its best-neighbor update leads to a better search trajectory for the considered problem. Moreover, although OBNS evaluates more candidates, its runtime remains substantially lower because it avoids repeated matrix inversions and reuses the inverse and intermediate quantities associated with the current switch state. At $1000$ evaluations, OBNS achieves a lower objective than SEBO and ES, while requiring only approximately $1/15$ and $1/60$ of their runtimes, respectively. These results demonstrate the effectiveness and computational efficiency of the proposed discrete search.

\section{Conclusion}\label{CON}
This paper investigated the modeling and beamforming optimization of PRA-enabled multiuser communication systems. We first interpreted pattern reconfiguration as complex reweighting of multipath components and distinguished it from the phase-only reweighting induced by position reconfiguration. We then developed a finite-aperture ideal PRA model by relating the antenna aperture to the effective Fourier order and deriving per-order power bounds on the Fourier coefficients. A multiport network model was further established for practical RPAs, and full-wave simulations together with three RPA designs verified the consistency between the ideal and practical models. For multiuser downlink transmission, we formulated WSR maximization problems under both models and developed a CBP-WMMSE algorithm for the practical mixed-binary problem. The proposed OBNS method, combined with the Sherman--Morrison update, enables efficient switch-state optimization without repeated matrix inversions. Simulation results showed that PRA-based channel reconfiguration provides clear sum-rate gains over position-reconfigurable and conventional antenna arrays, while the proposed algorithm achieves competitive performance with substantially reduced runtime for RPAs with a large number of switches.

\appendices
\section{Proof of Proposition 1}\label{app:proof1}
The radiation pattern can be expressed as the integral of the radiations from all points within the antenna aperture. Setting the center of the smallest circle enclosing the antenna aperture as the origin, let
$\boldsymbol p= [r\cos\phi,r\sin\phi]^{\rm T}$ denote a point in the aperture and let $\boldsymbol x =[\cos\theta,\sin\theta]^{\rm T}$
denote the observation direction. Therefore, we have
\begin{align}\label{Radiation}
	f(\theta)
	&=
	\int_{\mathcal A}
	W(\boldsymbol p)
	e^{j\frac{2\pi}{\lambda}
		\boldsymbol p^{\rm T}\boldsymbol x}
	{\rm d}\boldsymbol p \nonumber\\
	&=
	\int_{\mathcal A}
	W(\boldsymbol p)
	e^{j\frac{2\pi r}{\lambda}
		\cos(\theta-\phi)}
	{\rm d}\boldsymbol p,
\end{align}
where $\mathcal A$ denotes the antenna-aperture region, $r=\|\boldsymbol p\|_2$, and $W(\boldsymbol p)$ is the current intensity at $\boldsymbol p$. Applying the Jacobi-Anger expansion gives~\cite{Gbur2011}
\begin{align}\label{JacobiAnger}
	e^{j\frac{2\pi r}{\lambda}
		\cos(\theta-\phi)}
	=
	\sum_{n=-\infty}^{\infty}
	j^n
	J_n\left(\frac{2\pi r}{\lambda}\right)
	e^{jn\theta}e^{-jn\phi},
\end{align}
where $J_n(\cdot)$ is the first-kind Bessel function .

Substituting \eqref{JacobiAnger} into \eqref{Radiation}, we obtain
\begin{align}
	f(\theta)
	=&
	\int_{\mathcal A}
	W(\boldsymbol p)
	\sum_{n=-\infty}^{\infty}
	j^n
	J_n\left(\frac{2\pi r}{\lambda}\right)
	e^{jn\theta}e^{-jn\phi}
	{\rm d}\boldsymbol p
	\nonumber\\
	=&
	\sum_{n=-\infty}^{\infty}
	\left[
	j^n
	\int_{\mathcal A}
	W(\boldsymbol p)
	J_n\left(\frac{2\pi r}{\lambda}\right)
	e^{-jn\phi}
	{\rm d}\boldsymbol p
	\right]e^{jn\theta}.
	\label{RadiationFourier}
\end{align}
By comparing \eqref{RadiationFourier} with
\eqref{FourierTransform}, we have
\begin{align}\label{FourierCoefficient}
	\gamma_n
	=
	j^n
	\int_{\mathcal A}
	W(\boldsymbol p)
	J_n\left(\frac{2\pi r}{\lambda}\right)
	e^{-jn\phi}
	{\rm d}\boldsymbol p.
\end{align}
The relationship in \eqref{FourierCoefficient} shows that the
$n$th angular mode is governed by the $n$th-order Bessel function
evaluated over the antenna aperture. We next determine the significant
range of $n$. Define the maximum aperture radius as
\begin{align}
	R_{\mathcal A}
	\triangleq
	\max_{\boldsymbol p \in\mathcal A}~r 
	=
	\max_{\boldsymbol p \in\mathcal A}~
	\|\boldsymbol p\|_2.
\end{align}
Its normalized electrical radius is $2\pi R_{\mathcal A}/\lambda$.
For every point within the antenna aperture, the argument of the
Bessel function satisfies
\begin{align}
	0
	\leq
	\frac{2\pi r}{\lambda}
	\leq
	\frac{2\pi R_{\mathcal A}}{\lambda}.
\end{align}

For an integer $n\geq1$, the Bessel function satisfies~\cite{Olver2010}
\begin{align}
	|J_n(x)|
	\leq
	\frac{1}{n!}\left(\frac{x}{2}\right)^n
	\leq
	\left(\frac{ex}{2n}\right)^n,
	\label{BesselSimpleBound}
\end{align}
where the second inequality follows from
$n!\geq(n/e)^n$. Since
$x\leq2\pi R_{\mathcal A}/\lambda$ over the antenna aperture,
\eqref{BesselSimpleBound} further gives
\begin{align}
	|J_n(x)|
	\leq
	\left(
	\frac{e\pi R_{\mathcal A}}
	{n\lambda}
	\right)^n.
\end{align}
Accordingly, \eqref{FourierCoefficient} implies
\begin{align}\label{gamma_n}
	|\gamma_n|
	\leq
	\left(
	\int_{\mathcal A}|W(\boldsymbol p)|
	{\rm d}\boldsymbol p
	\right)
	\left(
	\frac{e\pi R_{\mathcal A}}
	{n\lambda}
	\right)^n.
\end{align}
When
\begin{align}
	n>\frac{e\pi R_{\rm A}}{\lambda},
\end{align}
${e\pi R_{\mathcal A}}/{(n\lambda)}< 1$, and the second factor on the right-hand side of \eqref{gamma_n} rapidly approaches zero as $n$ increases. For negative orders,
$J_{-n}(x)=(-1)^nJ_n(x)$, and hence the same upper envelope applies
with $n$ replaced by $|n|$. Therefore, a finite antenna aperture induces
a rapidly decaying envelope on sufficiently high-order Fourier coefficients.
This motivates the aperture-dependent effective Fourier truncation order
\begin{align}
	N\triangleq
	\left\lceil\frac{e\pi R_{\rm A}}{\lambda}\right\rceil .
\end{align}


\section{Derivation of $\rho_n$}\label{app:proof2}
From \eqref{FourierCoefficient}, the power of the $n$th Fourier coefficient satisfies
\begin{align}
	|\gamma_n|^2
	&=
	\left|
	\int_{\mathcal A}
	W(\boldsymbol p)
	J_n\left(
	\frac{2\pi r}{\lambda} 
	\right)e^{-jn\phi}
	{\rm d}\boldsymbol p
	\right|^2
	\leq
	P_W B_n,
	\label{GammaCauchyBound}
\end{align}
where the inequality follows from the Cauchy--Schwarz inequality, and
\begin{align}
	P_W
	\!\triangleq\!
	\int_{\mathcal A}|W(\boldsymbol p)|^2{\rm d}\boldsymbol p,\
	B_n
	\!\triangleq\!
	\int_{\mathcal A}
	\left|
	J_n\left(
	\frac{2\pi r}{\lambda}
	\right)
	\right|^2
	{\rm d}\boldsymbol p.
\end{align}
For a given antenna aperture, $B_n$ is a geometry-dependent constant that can be evaluated numerically. Moreover, since
$J_{-n}(x)=(-1)^nJ_n(x)$, we have $B_{-n}=B_n$.

Although the total radiation-pattern power is normalized in
\eqref{CoefficientPowerConstraint}, it does not directly provide a finite upper bound on $P_W$. In particular, weakly radiating or non-radiating current components and destructive current cancellation may produce a large current norm while maintaining the same normalized far-field power. Therefore, obtaining a universal strict upper bound on $P_W$ solely from the radiation-power normalization is generally difficult. To characterize the physically admissible current intensity, we introduce the modeling constraint
\begin{align}\label{CurrentPowerConstraint}
	P_W\leq C,
\end{align}
where $C>0$ is a current-capability parameter. A smaller $C$ represents a more restrictive antenna model with limited current intensity and weaker high-order excitation, whereas a larger $C$ allows stronger and more complex current distributions.  Combining \eqref{GammaCauchyBound} and
\eqref{CurrentPowerConstraint} yields
\begin{align}
	|\gamma_n|^2\leq C B_n.
\end{align}
Meanwhile, the total power constraint
\eqref{CoefficientPowerConstraint} implies
$|\gamma_n|^2\leq1$ for every order. Hence,
\begin{align}\label{IndividualModeBound}
	|\gamma_n|^2
	\leq
	\min\{1,C B_n\}.
\end{align}

\bibliographystyle{IEEEtran}
\bibliography{IEEEabrv,IEEEexample}

\end{document}